\definecolor{citecolor}{HTML}{00AA00}
\theoremstyle{plain}
\newtheorem{theorem}{Theorem}[section]
\newtheorem{lemma}[theorem]{Lemma}
\newtheorem{proposition}[theorem]{Proposition}
\theoremstyle{definition}
\newtheorem*{remark}{Remark}
\newtheorem*{example*}{Example}
\newcommand{\ang}[1]{\{#1\}}
\newcommand{\tildO}{O\tilde{~}}
\newcommand{\Romnum}[1]{\uppercase\expandafter{\romannumeral #1}}
\DeclareMathOperator{\End}{{\rm End}} % endomorphism ring
\DeclareMathOperator{\gal}{{\rm Gal}} % Galois group
\def\Q{\ensuremath{\mathbb{Q}}}
\def\Z{\ensuremath{\mathbb{Z}}}
\def\F{\ensuremath{\mathbb{F}}}
\def\K{\ensuremath{\mathbb{K}}}
\def\P{\ensuremath{\mathbb{P}}}
\def\MM{\ensuremath{\mathsf{M}}}
\newcommand{\D}{\Delta}
\newcommand{\ph}{(\phi/f)}
\title{Drinfeld Modules with Complex Multiplication, Hasse Invariants and Factoring Polynomials 
over Finite Fields}
\author{
	Javad Doliskani\thanks{Institute for Quantum Computing, University of Waterloo
	(\tt{javad.doliskani@uwaterloo.ca}).}
	\and
	Anand Kumar Narayanan\thanks{Laboratoire d'Informatique de Paris 6, Pierre et Marie Curie 
	University (\tt{Anand.Narayanan@lip6.fr})}
	\and
	\'Eric Schost\thanks{Cheriton School of Computer Science, University of Waterloo 
	(\tt{eschost@uwaterloo.ca})}
}
\begin{document}
\maketitle

\begin{abstract}
	We present a novel randomized algorithm to factor polynomials over a
	finite field $\F_q$ of odd characteristic using rank $2$ Drinfeld
	modules with complex multiplication. The main idea is to compute a
	lift of the Hasse invariant (modulo the polynomial $f \in \F_q[x]$
	to be factored) with respect to a random Drinfeld module $\phi$ with
	complex multiplication. Factors of $f$ supported on prime ideals
	with supersingular reduction at $\phi$ have vanishing Hasse invariant
	and can be separated from the rest.
	Incorporating a Drinfeld module analogue of Deligne's congruence, we
	devise an algorithm to compute the Hasse invariant lift, which
	turns out to be the crux of our algorithm. The resulting expected
	runtime of $n^{3/2+\varepsilon} (\log q)^{1+o(1)}+n^{1+\varepsilon} (\log
	q)^{2+o(1)}$ to factor polynomials of degree $n$ over $\F_q$ matches
	the fastest previously known algorithm, the Kedlaya-Umans
	implementation of the Kaltofen-Shoup algorithm.
\end{abstract}

\paragraph{Keywords}
	Elliptic modules, Drinfeld modules, Polynomial factorization, Hasse invariant, Complex 
	multiplication

%% \begin{AMS}
%% 	11G09, 11Y16, 12Y05
%% \end{AMS}

\section{Introduction}

Drinfeld modules of rank two are often presented as an analogue over
function fields such as $\F_q(x)$ (for a prime power $q$) of elliptic
curves over $\Q$; following Drinfeld's original terminology, we will
often call them {\em elliptic modules} in this paper. In very concrete
terms, a Drinfeld module over $\F_q(x)$ is simply a ring homomorphism
$\phi$ (together with some mild assumptions) from $\F_q[x]$ to the
ring of skew polynomials $\F_q(x)\ang{\tau}$, where $\tau$ satisfies
the commutation relation $\tau u = u^q \tau$ for $u$ in $\F_q(x)$. The
rank of a Drinfeld module is the degree of $\phi_x$, where as
customary, we write $\phi_a:=\phi(a)$ for $a$ in $\F_q[x]$.

In this definition, one may replace $\F_q(x)$ by any other other field
$L$ equipped with a homomorphism $\F_q[x]\to L$, and in particular by
a finite field of the form $L=\F_q[x]/f$; one may then define the {\rm
  reduction} of a Drinfeld module over $\F_q(x)$ modulo an irreducible
$f \in \F_q[x]$. Then, there exist striking similarities between the
theory of elliptic curves over $\Q$ and their reductions modulo
primes, and that of elliptic modules over $\F_q(x)$ and their
reduction modulo irreducible polynomials $f$. For instance, such
notions as endomorphism ring, complex multiplication, Hasse
invariants, supersingularity, or the characteristic polynomial of the
Frobenius, \dots~can be defined in both contexts, and share many
properties. On the other hand, while the literature on algorithmic
aspects of elliptic curves is extremely rich, this is not the case for
Drinfeld modules; only recently have they been considered under the
algorithmic viewpoint (for instance, it is known that they are not
suitable for usual forms of public key cryptography~\cite{Scanlon01}).

In this article, we give an algorithm for the computation of the Hasse
invariant of elliptic modules over finite fields, and show how efficient
algorithms for this particular problem (and a natural generalization
thereof) can be used to factor polynomials over finite fields.

To wit, recall that the Hasse invariant $h_E$ of an elliptic curve $E:
y^2=x^3+Ax+B$ over a finite field $\F_p$, $p>2$, can be defined as the
coefficient of degree $p-1$ in $(x^3+Ax+B)^{(p-1)/2}$ (other
definitions set it to be $1$ if this coefficient is nonzero, $0$
otherwise). The definition of $h_E$ makes it possible to compute it
using a number of operations softly linear in $p$, but one can do
better: it is possible to compute $h_E$ without computing all previous
coefficients, using the fact that the coefficients of 
$(x^3+Ax+B)^{(p-1)/2}$ satisfy a linear recurrence with polynomial
coefficients, and applying techniques for such recurrences due to
Strassen~\cite{Strassen76} and Chudnovsky and
Chudnovsky~\cite{ChCh88} (see also~\cite{BoGaSc07}).

In the case of elliptic modules, we will consider $\phi: \F_q[x] \to
\F_q(x)\ang{\tau}$, such that $\phi(x)=x+ g_\phi \tau+\Delta_\phi
\tau^2$, with $g_\phi$ and $\Delta_\phi$ in $\F_q[x]$. If $f$ is
irreducible of degree $k$ and does not divide $\Delta_\phi$, we say
that $\phi$ has {\em good reduction} at $f$. Then, the {\em Hasse
  invariant} $h_{\phi,f}$ is defined as the coefficient of $\tau^k$ in
$\phi_f\bmod f =\sum_{i=0}^{2k} h_i \tau^i$, with $h_i$ in $\F_q[x]/f$
for all $i$ (all coefficients of index less than $k$ vanish modulo
$f$).  For an arbitrary squarefree $f$, using a recurrence due to
Gekeler~\cite{gek} (which is somewhat similar to the one used for
elliptic curves), it is possible to define a related quantity which we will write
$\bar h_{\phi,f}$, in a way that ensures arithmetic properties needed
for the factorization algorithm below (see Eq.~\eqref{eqdef:hbar} for the 
definition).

As in the case of elliptic curves, one can deduce a straightforward
algorithm from the definition of $\bar h_{\phi,f}$; our first
contribution is to show that a better algorithm exist. We will
consider two different complexity models in our runtime analysis: an
\textit{algebraic model} and a \textit{boolean model}. In the former
we count the number of operations $+, \times, \div$ in the field
$\F_q$, while in the latter we count the number of bit operations,
over a standard RAM (the main reason behind this dichotomy is that
some operation at the core of our algorithms, namely {\em modular
  composition}, admits faster algorithms in the boolean model; we
discuss this further in the next section).  We denote by $\MM$ a
function such that polynomials of degree $n$ over any ring can be
multiplied in $\MM(n)$ base ring operations, and such that the
superlinearity conditions of~\cite[Chapter~8]{vzGG} are satisfied; we
can take $\MM(n) \in O(n\log n \log\log n)$~\cite{CaKa91}. We let
$\omega$ be a feasible exponent for square matrix multiplication; we
can always take $\omega \le 2.38$ \cite{CoWi90,LeGall14}. Then, our
first main result is the following theorem.

\begin{theorem}\label{theo:1}
  Let $\phi: \F_q[x] \to \F_q(x)\ang{\tau}$ be an elliptic
  module, such that $\phi(x)=x+ g_\phi \tau+\Delta_\phi \tau^2$, with
  $g_\phi$ and $\Delta_\phi$ in $\F_q[x]$.
  Given a squarefree polynomial $f$ of degree $n$, as well as $g_\phi
  \bmod f$ and $\Delta_\phi \bmod f$, one can compute $\bar h_{\phi,f}$ using
  \[O(n^{(1 - \beta)(\omega - 1) / 2 + (\omega + 1) / 2} + \MM(n^{1+\beta})\log qn)\] 
  operations in $\F_q$, for any $\beta \in (0,1)$, or using
  \[O( n^{3/2+\varepsilon} (\log q)^{1+o(1)} + n^{1+\varepsilon} (\log q)^{2+o(1)})\]
  bit operations, for any $\varepsilon > 0$.
\end{theorem}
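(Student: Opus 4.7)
The plan is to compute $\bar h_{\phi,f}$ by unwinding Gekeler's recurrence \eqref{eqdef:hbar} into an iterated product in the skew polynomial ring $(\F_q[x]/f)\ang{\tau}$, and then to accelerate this product with a baby-step/giant-step scheme whose expensive subroutines are iterated Frobenius operations modulo $f$---the very operation for which modular composition provides an asymptotic improvement.

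Writing $\phi_x = x + g_\phi\tau + \Delta_\phi\tau^2$ and $\phi_{x^i} \bmod f = \sum_j c_{i,j}\tau^j$ with $c_{i,j} \in \F_q[x]/f$, the multiplicative relation $\phi_{x^{i+1}} = \phi_{x^i}\phi_x$ combined with the skew rule $\tau^j u = u^{q^j}\tau^j$ gives the three-term recurrence
\[
c_{i+1,j} = c_{i,j}\, x^{q^j} + c_{i,j-1}\, g_\phi^{q^{j-1}} + c_{i,j-2}\, \Delta_\phi^{q^{j-2}},
\]
and Gekeler's congruence identifies $\bar h_{\phi,f}$ with a distinguished linear combination of the $c_{n,j}$'s---essentially the diagonal coefficient $c_{n,n}$, combined with the coefficients of $f$ when $f$ is not a pure power of $x$. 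A direct iteration of this recurrence costs $\tilde{O}(n^2)$ operations in $\F_q$, which I would improve as follows.

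Fix $\beta \in (0,1)$ and set $s := n^\beta$. As preprocessing, I compute the iterated Frobenius images $x^{q^i}, g_\phi^{q^i}, \Delta_\phi^{q^i} \bmod f$ for $i \le 2n$ via repeated squaring combined with modular composition, at total cost $\MM(n^{1+\beta})\log(qn)$; this accounts for the second summand in the algebraic bound. The $n$ recurrence steps are then sliced into $n^{1-\beta}$ blocks of $s$ steps. For each block one precomputes a constant-size matrix transition operator whose entries are polynomials in $\F_q[x]/f$ of degree $O(ns)$, obtained by packing the $s$ Frobenius images relevant to that block into the coefficients of a single univariate polynomial. Applying each block operator to the running state then reduces to $n^{1-\beta}$ modular compositions modulo $f$, each costing $O(n^{(\omega+1)/2})$ via Brent--Kung, while the auxiliary matrix--matrix multiplications contribute a further factor $n^{(1-\beta)(\omega-1)/2}$; together these yield the first summand $n^{(1-\beta)(\omega-1)/2 + (\omega+1)/2}$.

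The main obstacle is the packaging step above: the Frobenius exponents $q^j, q^{j-1}, q^{j-2}$ grow with $j$, so successive transition matrices along a block are not identical but Frobenius shifts of one another, and one must verify that the product of $s$ such shifted matrices is truly an evaluation of a single matrix of polynomials at the appropriate Frobenius iterate, so that chaining the blocks genuinely reduces to one modular composition per block rather than to a cascade that would restore the quadratic cost. Once this structural reduction is secured, the boolean bound follows by substituting Kedlaya--Umans' nearly linear-time modular composition for Brent--Kung and optimizing at $\beta = 1/2$: the giant-step phase becomes $n^{3/2+\varepsilon}(\log q)^{1+o(1)}$, while the Frobenius preprocessing, which absorbs one $\log q$ factor from repeated squaring and another from arithmetic in $\F_q$, yields the summand $n^{1+\varepsilon}(\log q)^{2+o(1)}$.
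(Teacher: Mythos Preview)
Your high-level plan---write the recurrence as a product of $2\times 2$ matrices that are Frobenius shifts of one another, then attack the product by baby-steps/giant-steps---is the right one and matches the paper. But several of the concrete steps you propose do not work as stated, and the step you flag as ``the main obstacle'' is precisely the one the paper has to supply.

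First, the object to compute is not a coefficient of $\phi_f$: by definition $\bar h_{\phi,f}=\gcd(r_{\phi,n}\bmod f,\,r_{\phi,n+1}\bmod f)$, where $(r_{\phi,k})$ is Gekeler's one-dimensional order-two recurrence~\eqref{eisenstein_recurrence}. So one should work directly with that recurrence, not with the two-parameter array $c_{i,j}$ of coefficients of $\phi_{x^i}$. Second, your preprocessing step is infeasible: writing down $x^{q^i},g_\phi^{q^i},\Delta_\phi^{q^i}\bmod f$ for all $i\le 2n$ already requires $\Theta(n^2)$ field elements of output, so this cannot be done in $\MM(n^{1+\beta})\log(qn)$ operations for any $\beta<1$. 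In the paper the $\MM(n^{1+\beta})\log(qn)$ term does not come from precomputing all Frobenius iterates; it comes from building, once, a single matrix polynomial $\mathcal B\in\mathscr M_2(\K)[Y]$ of degree $\ell=n^\beta$ (the baby steps) and from fast multipoint evaluation of $\mathcal B$ at $m=n^{1-\beta}$ points (part of the giant steps). Third, your cost accounting for the giant steps is off: doing $n^{1-\beta}$ independent Brent--Kung compositions gives exponent $(1-\beta)+(\omega+1)/2$, which is strictly worse than the claimed $(1-\beta)(\omega-1)/2+(\omega+1)/2$. The improved exponent comes from \emph{batching} the $m$ Frobenius applications using Kaltofen--Shoup's simultaneous-composition trick (their Lemma~3), which costs $O(m^{(\omega-1)/2}n^{(\omega+1)/2})$ rather than $O(m\,n^{(\omega+1)/2})$.

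Finally, the ``packaging step'' you leave open is the crux. The paper resolves it by introducing an auxiliary indeterminate $Y$ and writing the transition matrix as $A_k=\tau^k(\mathcal A)(\xi)$ for a fixed degree-one matrix $\mathcal A\in\mathscr M_2(\K)[Y]$. Then $\mathcal B:=\tau^{\ell-1}(\mathcal A)\cdots\tau(\mathcal A)\,\mathcal A$ has degree $\ell$ in $Y$, and the key identity $\tau^{i}\bigl(\mathcal B(\xi_{-i})\bigr)=A_{i+\ell-1}\cdots A_i$ shows that each giant step is an evaluation of the \emph{same} polynomial matrix $\mathcal B$ at a Frobenius-shifted point $\xi_{-i}$. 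This is exactly what allows fast multipoint evaluation plus batched Frobenius to replace the naive $m\ell$ matrix products, and without it the algorithm does not achieve the stated bound.
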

The algorithm is inspired by the baby steps / giant steps algorithms
for recurrences with polynomial coefficients
of~\cite{Strassen76,ChCh88,BoGaSc07}, and also borrows heavily from
Kaltofen and Shoup's baby steps / giant steps distinct degree
factorization algorithm~\cite{ks}; indeed, the structures of the
algorithms are similar, and the runtime reported here is the same as that
in that reference. As in~\cite{ks}, for our first claim, taking
$\omega \approx 2.375$ and $\beta \approx 0.815$, the complexity is
$O(n^{1.815}\log q)$ operations in $\F_q$.

We use these results to design a polynomial factorization
algorithm. The resulting runtime is not better than that in~\cite{ks}
(in the algebraic model) or~\cite{ku} (in the boolean model); however,
we believe it is worth stating such results, since they bring a new
perspective to polynomial factorization questions.

The use of Drinfeld modules for polynomial factorization actually
goes back to work of Panchishkin and Potemine \cite{pp}, whose algorithm was
rediscovered by van der Heiden \cite{vdH}. These algorithms, along
with the second author's Drinfeld module black box Berlekamp
algorithm~\cite{nar} are in spirit Drinfeld module analogues of
Lenstra's elliptic curve method to factor integers \cite{len}. The
Drinfeld module degree estimation algorithm of \cite{nar} uses
Euler-Poincar\'e characteristics of Drinfeld modules to estimate the
factor degrees in distinct degree factorization. A feature common to
the aforementioned algorithms is their use of random Drinfeld modules,
which typically don't have complex multiplication.

We take a different approach. To factor a squarefree polynomial $f$,
we construct a random elliptic module $\phi$ with complex
multiplication by an imaginary quadratic extension of the rational
function field $\F_q(x)$ with class number $1$. At roughly half of the
prime ideals $\langle g \rangle$ in $\F_q[x]$, $\phi$ has so-called
supersingular reduction. Concretely, the properties of the quantity
$\bar h_{\phi,f}$ computed modulo $f$ imply that for any prime factor
$g$ of $f$, $g$ is a prime with supersingular reduction for $\phi$ if
and only if $\bar h_{\phi,f} = 0 \bmod g$. The construction of $\phi$
ensures that this happens with non-vanishing probability. Altogether,
we obtain the following result.

\begin{theorem}
  \label{theo:main-factor}
  Suppose that given any squarefree polynomial $f$ of degree $n$ over
  $\F_q$, and any Drinfeld module $\phi$ as in Theorem~\ref{theo:1},
  one can compute $\bar h_{\phi,f}$ in $H(n,q)$ operations in $\F_q$,
  resp.\ $H^*(n,q)$ bit operations. Suppose also that
  $H(n_1,q)+H(n_2,q) \le H(n_1+n_2,q)$, resp.\ $H^*(n_1,q)+H^*(n_2,q)
  \le H^*(n_1+n_2,q)$, holds for all $n_1,n_2 \ge 0$.  Then there is a
  randomized algorithm that can factor degree $n$ polynomials over
  $\F_q$ in an expected
  \begin{itemize}
	\item $\tildO(H(n,q)  + n^{(\omega+1)/2} + \MM(n)\log q )$ operations in $\F_q$, or
	\item $\tildO(H^*(n,q)) + n^{1+\varepsilon}(\log q)^{2+o(1)}$ bit operations.
  \end{itemize}
\end{theorem}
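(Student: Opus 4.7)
The plan is to deploy the Hasse invariant algorithm of Theorem~\ref{theo:1} as a splitting oracle inside a standard factorization pipeline. First, preprocess $f$ with squarefree factorization, costing $\tilde{O}(\MM(n)\log q)$, followed by Kaltofen-Shoup distinct-degree factorization, costing $\tilde{O}(n^{(\omega+1)/2} + \MM(n)\log q)$ operations in $\F_q$ and $\tilde{O}(n^{1+\varepsilon}(\log q)^{2+o(1)})$ bit operations via Kedlaya-Umans modular composition. These account for the non-$H$ terms in the stated bounds and reduce the task to factoring the resulting equal-degree components.

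For each such component $f$ of degree $m$ whose irreducible factors share a common degree, I would follow the strategy sketched in the introduction: draw a random Drinfeld module $\phi$ with complex multiplication by a fixed imaginary quadratic extension of $\F_q(x)$ of class number $1$ (resampling if $\phi$ fails to have good reduction at $f$, detected by checking that $\Delta_\phi \bmod f$ is a unit); compute $\bar h_{\phi,f} \bmod f$ via the assumed algorithm in $H(m,q)$ operations (resp.\ $H^*(m,q)$ bit operations); and take $u := \gcd(\bar h_{\phi,f}, f)$ at cost $\tilde{O}(\MM(m))$. By the arithmetic property of $\bar h$ asserted in the introduction, $u$ is exactly the product of those irreducible factors $g \mid f$ at which $\phi$ has supersingular reduction, so $u$ and $f/u$ partition the prime factors of $f$ along the supersingular/ordinary dichotomy for $\phi$. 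Recurse on both parts with an independent fresh draw of $\phi$ at each node. Because the random construction of $\phi$ ensures that any fixed pair of distinct primes falls in different buckets with non-vanishing probability, the split is balanced with constant probability per trial, giving a recursion tree of expected depth $O(\log n)$. The assumed superadditivity of $H$ (resp.\ $H^*$) then collapses the Hasse invariant costs over the tree to $\tilde{O}(H(n,q))$ (resp.\ $\tilde{O}(H^*(n,q))$), while the $\tilde{O}(\MM(m))$ GCD overhead aggregates to $\tilde{O}(\MM(n)\log n)$, absorbed into the preprocessing terms.

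The principal obstacle is the probabilistic analysis: proving that for any two distinct irreducible factors $g_1, g_2 \mid f$, a random CM Drinfeld module $\phi$ in the chosen class-number-$1$ family separates them with probability bounded below by a positive constant. This reduces to an equidistribution statement for Hasse invariant vanishing (equivalently, the Frobenius type) over the random choice of $\phi$, using the class-number-$1$ hypothesis to keep the isogeny class manageable; it is the claim the introduction refers to when it states that a balanced split ``happens with non-vanishing probability.'' Secondary technicalities, such as efficiently sampling $\phi$, handling rare bad reductions by rejection, and swapping Brent-Kung for Kedlaya-Umans modular composition when transferring from the algebraic to the boolean model, are routine and do not affect the headline bounds.
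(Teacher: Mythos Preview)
Your pipeline deviates from the paper's in a way that introduces a genuine gap. You preprocess with Kaltofen--Shoup distinct-degree factorization and claim this costs $\tilde{O}(n^{(\omega+1)/2}+\MM(n)\log q)$ in the algebraic model (resp.\ $\tilde{O}(n^{1+\varepsilon}(\log q)^{2+o(1)})$ bit operations). This is incorrect: no known distinct-degree factorization runs in that time. The Kaltofen--Shoup DDF costs $O(n^{(1-\beta)(\omega-1)/2+(\omega+1)/2}+\MM(n^{1+\beta})\log qn)$ for $\beta\in(0,1)$, and no choice of $\beta$ collapses this to your claimed bound; in the boolean model it is $n^{3/2+\varepsilon}(\log q)^{1+o(1)}+n^{1+\varepsilon}(\log q)^{2+o(1)}$. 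Since Theorem~\ref{theo:main-factor} is stated for an \emph{abstract} cost function $H$, inserting a DDF step with its own cost $K(n,q)$ yields only $\tilde{O}(H(n,q)+K(n,q)+\cdots)$, which does not match the claimed bound when $H$ is (hypothetically) cheaper than $K$. Your attribution of the non-$H$ terms to DDF is therefore wrong, and with it the proof of the theorem as stated.

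The paper's algorithm bypasses DDF entirely---indeed this is highlighted as a feature in the introduction. It applies the Hasse-invariant splitting directly to the squarefree input (after stripping linear factors, which is necessary because the constructed $\phi$ has bad reduction precisely at certain linear primes; your ``resample on bad reduction'' would not terminate here). Lemma~\ref{splitting_lemma} then shows, via quadratic reciprocity over function fields and the effective Chebotarev density theorem applied to the biquadratic compositum $K_1K_2$, that any two irreducible factors of \emph{arbitrary, possibly unequal} degrees greater than one are separated with probability at least $1/4$; no equal-degree hypothesis is used. The $n^{(\omega+1)/2}$ term in the theorem is the cost of the irreducibility test at each recursive node, not of DDF. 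Your identification of the probabilistic separation lemma as the crux is right, but restricting it to equal-degree pairs both weakens it needlessly and forces the DDF detour that breaks the cost accounting.
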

The runtime reported in the first item in Theorem~\ref{theo:1} imply
that we can take $H(n,q) = \kappa n^{1.815}\log q$, for a suitable
constant $\kappa$. Since such a function satisfies the superlinearity
assumption of Theorem~\ref{theo:main-factor}, the resulting runtime for the factoring
algorithm is is $O(n^{1.815}\log q)$ operations in $\F_q$.  In a
boolean model, the second item in Theorem~\ref{theo:1} shows that for
any $\varepsilon > 0$, we can take $H^*(n,q) = \kappa^*_\varepsilon (
n^{3/2+\varepsilon} (\log q)^{1 +o(1)} + n^{1+\varepsilon}(\log
q)^{2+o(1)})$ bit operations, for some constant $\kappa^*_\varepsilon$; superlinearity still holds,
which then implies a similar runtime for factorization. 

As mentioned above, these results do not improve on the
state-of-the-art for polynomial factorization.
Since Berlekamp's randomized polynomial time algorithm~\cite{ber},
polynomial factorization over finite fields has been the subject of a
vast body of work. Among important milestones, we
mention~\cite{cz,gs}; the first subquadratic algorithms are due to
Kaltofen and Shoup~\cite{ks}, with runtimes $O(n^{1.815}\log q)$
operations in $\F_q$  (two algorithms are in that reference:
a fast distinct degree factorization algorithm, and an algorithm
derived from the Berlekamp algorithm).

Kaltofen and Shoup already pointed out that a quasi-linear time
algorithm for modular composition would yield an exponent $3/2$ for
polynomial factorization; Kedlaya and Umans~\cite{ku} showed that
modular composition can be done in time $n^{1+\varepsilon} (\log
q)^{1+o(1)}$ in a boolean model, and exhibited the resulting
algorithm for factoring polynomials, with runtime $n^{3/2+\varepsilon}
(\log q)^{1 +o(1)} + n^{1+\varepsilon}(\log q)^{2+o(1)}$ bit
operations. 

We remark that our algorithm has the property of not requiring
separate distinct degree and equal degree factorization phases. In the
algebraic model, Kaltofen and Shoup's black box Berlekamp algorithm
has a similar runtime and also bypasses the distinct degree / equal
degree factorization stages; however, we are not aware of a version of
that algorithm that would feature a runtime of $n^{3/2+\varepsilon} (\log q)^{1 +o(1)} +
n^{1+\varepsilon}(\log q)^{2+o(1)}$ in the boolean model.
As to if the exponent $3/2$ in $n$ can be lowered, this remains an
outstanding open question.

The paper is organized as follows. In \S\ref{drinfeld_section},
elliptic modules are introduced; we define Hasse invariants and give
our algorithm for their computation, proving Theorem~\ref{theo:1}.  In
\S\ref{randomized_section}, we describe our factoring algorithm and
prove Theorem~\ref{theo:main-factor}, using a few notions from
function field arithmetic.

%####################################################
%####################################################
%####################################################

\section{Computing Hasse invariant lifts of elliptic modules}\label{drinfeld_section}

%####################################################

\subsection{Basic definitions}

Let $\F_q[x]$ denote the polynomial ring in the indeterminate $x$, for
some odd prime power $q$, and let $L$ be a field equipped with a
homomorphism $\gamma:\F_q[x] \to L$; typical examples for us will be
$L=\F_q(x)$ and $L=\F_q[x]/f$, for some irreducible polynomial $f$ in
$\F_q[x]$.  Let us further consider the skew polynomial ring
$L\{\tau\}$, where $\tau$ satisfies the commutation rule $\forall u
\in L, \tau u = u^q \tau$. Given an integer $r > 0$, a Drinfeld module of rank $r$
over $L$ is a ring morphism
\begin{equation}
\label{equ:Drinfeld}
	\begin{array}{rrll}
		\phi : & \F_q[x] & \longrightarrow & L\{\tau\} \\
		& x & \longmapsto & a_0 + a_1\tau + \cdots + a_r\tau^r	
	\end{array}
\end{equation}
with $a_0 = \gamma(x)$ and $a_r \ne 0$.

An \textit{elliptic module} is a rank-2 Drinfeld module, obtained by
setting $r = 2$ in \eqref{equ:Drinfeld}. Consider such an elliptic module
over $L=\F_q(x)$, and let $\gamma$ be the inclusion $\F_q[x] \to \F_q(x)$.
In such a case, $\phi$ will be written as
\[
\begin{array}{rrll}
	\phi : & \F_q[x] & \longrightarrow & \F_q(x)\ang{\tau} \\
	& x & \longmapsto & x + g_\phi \tau + \Delta_\phi \tau^2	
\end{array}
\]
for some $g_\phi \in \F_q[x]$ and nonzero $\Delta_\phi \in \F_q[x]$. 
For an irreducible polynomial $f \in \F_q[x]$, if $\Delta_\phi$ is
nonzero modulo $f$, then the reduction $\phi/f$ of $\phi$ at $f$ is defined as the elliptic module
\[
\begin{array}{rcll}
	\phi/f : & \F_q[x] & \longrightarrow & (\F_q[x]/f) \ang{\tau} \\
	& x & \longmapsto & x + (g_\phi \bmod f) \tau + (\Delta_\phi\bmod f) \tau^2;
\end{array}
\]
we say that $\phi$ has {\em good reduction} at $f$ in this case.
Then,
 the image of $a \in \F_q[x]$ under $\phi/f$ is denoted by
$(\phi/f)_a$. Even if $\Delta_{\phi}$ is zero modulo $f$, one could
still obtain the reduction $(\phi/f)$ of $\phi$ at $f$ through
minimal models of $\phi$~\cite{gek1}; we refrain from
addressing this case since our algorithms do not require it.

Let $\phi$ be as above and let $f \in \F_q[x]$ be an irreducible 
polynomial not dividing $\Delta_\phi$. The {\em Hasse invariant} $h_{\phi,f}
\in \F_q[x]/f$ of $\phi$ at $f$ is the coefficient of $\tau^{\deg(f)}$
in the expansion $$\ph_f = \sum_{i=0}^{2\deg(f)} h_i \tau^i \in
(\F_{q}[x]/f)\ang{\tau}.$$
The elliptic module $\phi$ has {\em supersingular reduction} at $f$ if
$h_{\phi,f}$ vanishes~\cite{gos}; otherwise, we say that it has {\em
  ordinary reduction}. If the choice of $\phi$ is clear from context,
we will simply call $f$ supersingular.

Recursively define a sequence $(r_{\phi,k})_{k \in \mathbb{N}}$ in
$\F_q[x]^\mathbb{N}$ as $r_{\phi,0}:=1$, $r_{\phi,1}:=g_\phi$ and for
$k>1$,
\begin{equation}\label{eisenstein_recurrence}
r_{\phi,k} := g_\phi^{q^{k-1}}r_{\phi,k-1} - (x^{q^{k-1}}-x)\D_\phi^{q^{k-2}} r_{\phi,k-2} \in \F_q[x].
\end{equation}
Gekeler~\cite[Eq 3.6, Prop 3.7]{gek} showed that $r_{\phi,k}$ is
the value of the normalized Eisenstein series of weight $q^{k}-1$ on
$\phi$ and established Deligne's congruence for Drinfeld modules,
which ascertains that for any irreducible $f$ of degree $k \geq 1$ with
$\Delta_\phi \neq 0 \mod f$, we have
\begin{equation}\label{deligne_congruence}
 h_{\phi,f} = r_{\phi,k} \mod f.
\end{equation}
Hence $r_{\phi,k}$ is in a sense a lift to $\F_q[x]$ of all the Hasse
invariants of $\phi$ at primes of degree $k$. Using the sequence
$r_{\phi,k}$ allows us to define a polynomial $\bar h_{\phi,f}$ 
for arbitrary non-irreducible polynomials $f$, by setting 
\begin{equation}\label{eqdef:hbar}
\bar h_{\phi,f} :=
\gcd\left(r_{\phi,\deg(f)}\bmod f,r_{\phi,\deg(f)+1}\bmod f\right).
\end{equation}
Our definition will be justified in light of Lemma~\ref{lemma:split};
note that unlike in the irreducible case, we define $\bar h_{\phi,f}$ only up to
a non zero $\F_q$ multiple, which will suffice for our purpose. The
following subsections give our algorithm to compute $\bar h_{\phi, f}$ for
an arbitrary squarefree $f$, thereby proving Theorem~\ref{theo:1}.

%####################################################

\subsection{Some key subroutines}

 We summarize here the main results we will need;
most of them are well-known, and originate from work of von zur
Gathen-Shoup~\cite{gs} or Kaltofen-Shoup~\cite{ks}. A key ingredient
is the use of {\em modular composition}, that is, the operation
$(f,g,h)\mapsto f(g) \bmod h$, since several operations related to the
Frobenius map can be computed efficiently using this as a subroutine.

For $f,g,h$ of degree $n$, the best known algorithm for modular
composition was for long Brent and Kung's result~\cite{BrKu78}, with a
cost of $O(n^{(\omega+1)/2})$ base field operations; improvements
using fast rectangular matrix multiplication followed in the work of
Huang and Pan~\cite{HuPa98}. More recently, Kedlaya and Umans showed
that in a {\em boolean} model, there exist algorithms of cost close to
{\em linear}: for any $\varepsilon > 0$, there is an algorithm for
modular composition of degree $n$ polynomials over $\F_q$ that takes
$n^{1+\varepsilon} (\log q)^{1+o(1)}$ bit operations.

As of now, algorithms based on Brent and Kung's result have been
implemented on a variety of platforms~\cite{shoup2001ntl,BoCaPl97,Hart10},
and still outperform implementations of the Kedlaya-Umans algorithm. As
a result, we decided to give two variants of our main algorithm, using
these two possible key subroutines.

For the rest of this section, consider a squarefree polynomial $f$ in
$\F_q[x]$, define $\K=\F_q[x]/f$ and let $\xi$ be the image of $x$ in
$\K$.  Let $\tau: \K \to \K$ be the $\F_q$-linear $q$th-power
Frobenius map $a \mapsto a^q$; since $f$ is squarefree, $\tau$ is
invertible. It will then be convenient to define $\xi_i:=\tau^i(\xi)$,
for $i$ in $\Z$; then, for any $a$ in $\K$ and $i$ in $\Z$, $a(\xi_i)$
is well-defined (as $A(\xi_i) \in \K$, for an arbitrary lift $A$ of $a$ to
$\F_q[x]$) and satisfies $a(\xi_i)=\tau^i(a)$.  In particular, for any
$i,j$ in $\Z$, the relation $\xi_i(\xi_j)=\xi_{i+j}$ holds.

The following items describe subroutines needed in our main
algorithm. With the partial exception of the last one (see the remark below), all of them are known results.
\begin{enumerate}
\item[{\bf 1.}] Computing $\xi_1=\xi^q$ is done by repeated squaring, using
  $O(\MM(n)\log q)$ operations in $\F_q$.
\item[{\bf 2.}] Once $\xi_1=\xi^q$ is known, any $\xi_r$, $r \ge 0$, can be
  computed at the cost of $O(\log r)$ modular compositions, using the
  relation $\xi_i(\xi_j) = \xi_{i+j}$ stated above.
\item[{\bf 3.}] Given $\xi_i$ ($i \ge 0$), we can compute $\xi_{-i}$ by solving
  the linear equation $\xi_{-i}(\xi_i) =\xi$. This can be done using
  {\em transposed} modular composition, with a cost identical (up to a
  constant factor) to that of modular composition 
  itself~\cite{Shoup94,DeDoSc2014}.
\item[{\bf 4.}] Given $(a_0,\dots,a_{k-1})$ in $\K$ and $\xi_i$, for
  some $i \in \Z$, consider the question of computing
  $(\tau^i(a_0),\dots,\tau^i(a_{k-1}))=(a_0(\xi_i),\dots,a_{k-1}(\xi_i))$. In
  the boolean model, the cost is $k n^{1+\varepsilon} (\log
  q)^{1+o(1)}$ bit operations, which is essentially optimal.
  In the algebraic model, for $k=O(n)$, Lemma~3 in~\cite{ks} shows how
  to do this in $O(k^{(\omega-1)/2}n^{(\omega+1)/2})$ operations in
  $\F_q$ instead of $O(kn^{(\omega+1)/2})$, by exploiting the fact that the second argument is the same
  for all instances.

\item[{\bf 5.}] Given integers $u,\ell,m \in O(n)$, and $\xi_1$, one
  can compute $(\xi_{-u},\xi_{-(u+\ell)},\dots,\xi_{-(u+(m-1)\ell)})$
  using $O((m^{(\omega-1)/2}+\log n) n^{(\omega+1)/2})$ operations in $\F_q$,
  resp.\, $m n^{1+\varepsilon} (\log q)^{1+o(1)}$ bit operations. The
  latter estimate is straightforward; for the former, we
  follow~\cite[Lemma~4]{ks}, which shows (in our notation) how to
  compute $\xi_\ell,\xi_{2\ell},\dots,\xi_{(m-1)\ell}$. We work here
  with negative indices, but this hardly changes the procedure.

  We first compute $\xi_{-\ell}$, at the cost of $O(\log \ell)$
  modular compositions. Then, for $k \ge 1$, assuming that we know
  $\xi_{-\ell},\xi_{-2\ell},\dots,\xi_{-k \ell}$, we deduce
  $\xi_{-(k+1)\ell},\xi_{-(k+2)\ell},\dots,\xi_{-2k \ell}$ from the
  relation $\xi_{-(k+j)\ell}=\xi_{-j\ell}(\xi_{-k\ell})$, by means of
  item ${\bf 4}$. We repeat this process for $k=1,2,4,\dots$, stopping
  at the first power of two greater than or equal to $m$.  At this
  stage, we know $\xi_{-\ell},\dots,\xi_{-(m-1)\ell}$. To conclude, we
  compute $\xi_{-u}$ at the cost of $O(\log u)$ modular compositions, and finally
  $\xi_{-\ell}(\xi_{-u}),\dots,\xi_{-(m-1)\ell}(\xi_{-u})$. The total
  cost is $O(m^{(\omega-1)/2}n^{(\omega+1)/2})$ operations in $\F_q$.

\item [{\bf 6.}] For $u,\ell,m$ as above, given $(a_0,\dots,a_{m-1})$
  and $\xi_1$, we finally show how to compute the product
  $\tau^{u+(m-1)\ell}(a_{m-1}) \cdots \tau^{u}(a_0).$ Here, we
  actually take all $a_i$ as $2\times 2$ matrices over $\K$, since
  this is what we will need below (this has no impact on the algorithm
  description, except that we must account for the non-commutativity
  of the product). As above, in the boolean model, the cost is easily
  seen to be $m n^{1+\varepsilon} (\log q)^{1+o(1)}$ bit operations.
  
  To discuss the algorithm in the algebraic model, without loss of
  generality, we assume that $m$ is a power of $2$, say
  $m=2^t$. First, we replace our input by
  $(a_0(\xi_u),\dots,a_{m-1}(\xi_u))$. Let us then set $\mu=m$ and
  $\zeta=\xi_\ell$.  For $k=0,\dots,t-1$, we do the following: for
  $i=0,\dots,\mu/2-1$, replace $a_i$ by $a_{2i+1}(\zeta) a_{2i}$; then
  let $\zeta=\zeta(\zeta)$ and $\mu=\mu/2$.  At the end of the loop,
  the first entry in the sequence is the requested output.

  Computing $\xi_\ell$ and $\xi_u$ takes $O(\log n)$ modular
  compositions.  The initial composition by $\xi_u$ takes time
  $O(m^{(\omega-1)/2}n^{(\omega+1)/2})$, by item ${\bf 4}$; for an
  index $k$ in the main loop, the cost is similarly
  $O((m/2^k)^{(\omega-1)/2}n^{(\omega+1)/2})$; The overall runtime is
  thus $O((m^{(\omega-1)/2}+\log n)n^{(\omega+1)/2})$
  operations in $\F_q$.
\end{enumerate}

\begin{remark}
  The description we give for item ${\bf 6}$ answers a question
  in~\cite[Section~3.2]{ks}: the authors proved the existence of an
  algorithm to compute (in our notation) an expression of the form
  $a_0 + \tau^\ell(a_1) + \cdots + \tau^{(m-1)\ell}(a_{m-1})$
in
  $O(m^{(\omega-1)/2}n^{(\omega+1)/2})$ base field operations, but left
  the actual description of such an algorithm as an open question
  (in that context, $\log n$ is negligible in front of $m^{(\omega-1)/2}$,
  and the $a_i$'s are actually in $\K$).
  The procedure we gave above can be adapted to the computation
  of such a sum, simply by replacing all products by additions 
  in the main loop.
\end{remark}

%####################################################

\subsection{Efficient computation of $\bar h_{\phi,f}$}

Given a squarefree $f$ in $\F_q[x]$ as above, with $\deg(f)=n$, and an
elliptic module $\phi$ over $\F_q(x)$, we present an efficient
algorithm to compute $r_{\phi,n} \bmod f$ and $r_{\phi,n+1} \bmod f$,
thereby yielding $\bar h_{\phi,f} = \gcd(r_{\phi,n}\bmod
f,r_{\phi,n+1}\bmod f)$. The latter gcd can be computed in quasi-linear 
time, namely $O(\MM(n)\log(n))$ operations in $\F_q$, so we will not 
discuss it further.

Our strategy to compute $r_{\phi,n} \bmod f$ and $r_{\phi,n+1} \bmod
f$ is to first phrase the recurrence in matrix form with entries being
polynomials modulo $f$. We then observe that solving the recurrence
amounts to computing the product of a carefully constructed sequence
of matrices twisted by the Frobenius action. The final step is to
construct a polynomial with matrix coefficients, whose evaluations
allow us to rapidly compute the aforementioned product; the
evaluations are then computed using a fast multipoint evaluation
algorithm.

The overall structure of the algorithm is similar to the distinct
degree factorization algorithm of~\cite{ks}, with the polynomial
matrix described above being akin to the ``interval polynomial'' used
in that algorithm.  Another inspiration for our algorithm is
computation with simpler recurrences, that go back to Strassen's
deterministic integer factorization algorithm~\cite{Strassen76} and
Chudnovsky and Chudnovsky's algorithm for linear recurrences with
polynomial coefficients~\cite{ChCh88}. These baby-steps / giant steps
techniques allow one to compute the $n$th term in a sequence defined
by such a recurrence, for instance $u_0=1, u_{n+1}=(n+1) u_n$, in
$\tildO(\sqrt{n})$ base field operations. These ideas were
subsequently refined and applied to the computation of the Hasse-Witt
matrix of hyperelliptic curves in~\cite{BoGaSc07}.

With $\K = \F_q[x]/f$ and $\xi$ as before, we have to compute $\rho_n:=
r_{\phi,n}(\xi)$
and $\rho_{n+1}:=
r_{\phi,n+1}(\xi)$, where the sequence $r_{\phi,k}$ is
from~\eqref{eisenstein_recurrence}; equivalently,
\begin{equation}\label{eq:rho}
  \rho_0=1,\quad \rho_1=\gamma,\quad
\rho_{k} = \gamma^{q^{k-1}}\rho_{k-1} -
(\xi^{q^{k-1}}-\xi)\delta^{q^{k-2}} \rho_{k-2},
\end{equation}
with $\gamma = g_\phi(\xi)\in \K$ and $\delta=\Delta_\phi(\xi) \in
\K$.  Computing all terms $\rho_0,\rho_1,\dots,\rho_{n+1}$ takes
$\Omega(n^2)$ operations in $\F_q$, since merely writing down each
$\rho_i$ involves $\Theta(n)$ operations. The algorithm in this
section takes subquadratic time.

In the case of elliptic curves, the Hasse invariant is obtained as the
element of index $(p-1)/2$ in an order-2 recurrence with polynomial
coefficients. Seeing the similarity with Eq.~\eqref{eq:rho}, it is
natural to adapt these ideas to our context. This is however not
entirely straightforward.  Indeed, given $\rho_{k-1}$ and
$\rho_{k-2}$, computing $\rho_k$ now boils down to applying powers of the
Frobenius endomorphism, together with a few polynomial multiplications
/ additions.

The recurrence~\eqref{eq:rho} can be written as
\[
\begin{bmatrix}
\rho_{k - 1} \\
\rho_{k} 
\end{bmatrix} = 
\begin{bmatrix}
0 & 1 \\
(\xi-\xi^{q^{k - 1}})\delta^{q^{k - 2}} & \gamma^{q^{k - 1}}
\end{bmatrix}
\begin{bmatrix}
\rho_{k - 2} \\
\rho_{k - 1} 
\end{bmatrix}.
\]
Let as before $\tau: \K \to
\K$ be the $\F_q$-linear $q$th-power Frobenius map,
and define the following sequence of matrices in 
$\mathscr{M}_2(\K)$:
\[
A_k :=\begin{bmatrix}
0 & 1 \\
(\xi-\xi^{q^{k + 1}})\delta^{q^{k }} & \gamma^{q^{k + 1}}
\end{bmatrix} =
\begin{bmatrix}
0 & 1 \\
(\xi-\tau^{k+1}(\xi)) \tau^{k}(\delta) &  \tau^{k+1}(\gamma)
\end{bmatrix}
;
\]
then, we have
\[
\begin{bmatrix}
\rho_{k - 1} \\
\rho_{k} 
\end{bmatrix} = 
A_{k-2}A_{k - 3} \cdots A_0
\begin{bmatrix}
1 \\
\gamma
\end{bmatrix}.
\]
Given integers $m \le m'$, we show how to compute the product 
\[B_{m,m'} := A_{m' - 1} \cdots A_m ~ \in ~ \mathscr{M}_2(\K),\]
for then we can read off $\rho_{n}$ and $\rho_{n+1}$ from $B_{0,n}\left [\begin{smallmatrix}
1 \\
\gamma \\
\end{smallmatrix} \right ]$. We need the extra flexibility of 
starting the product at index $m$ in the algorithm; concretely, we will
rely on the relation $B_{m',m''}B_{m,m'}=B_{m,m''}$, for any
integers $m \le m' \le m''$. 

Extend the mapping $\tau$ to the (non-commutative) polynomial ring
$\mathscr{M}_2(\K)[Y]$ by leaving $Y$ fixed and acting on the
coefficient matrices entry-wise.  Let further
\[
\mathcal{A} := 
\begin{bmatrix}
0 & 1 \\
-\tau(\xi)\delta & \tau(\gamma)
\end{bmatrix}
+
\begin{bmatrix}
0 & 0 \\
\delta & 0
\end{bmatrix} Y ~ \in ~ \mathscr{M}_2(\K)[Y],
\]
and for $\mathcal{M} \in \mathscr{M}_2(\K)[Y]$ and $\zeta \in \K$, let $\mathcal{M}(\zeta)$ 
denote the image of $\mathcal{M}$ under the substitution 
\[Y \longmapsto 
D_\zeta=\begin{bmatrix}
	\zeta & 0 \\
	0 & \zeta
\end{bmatrix};
\]
in particular, $\mathcal{M} \mapsto \mathcal{M}(\zeta)$ is a ring
homomorphism, since $D_\zeta$ is in the center of $\mathscr{M}_2(\K)$.
Then, for any $k \ge 0$, we have $$A_k = \tau^{k}(\mathcal{A})(\xi).$$
Choose an integer $\ell \le n+1$ and define
\[\mathcal{B} := \tau^{\ell-1}(\mathcal{A}) \cdots \tau(\mathcal{A}) \mathcal{A}  ~ \in ~ \mathscr{M}_2(\K)[Y],\]
so that in particular
$\mathcal{B}(\xi) = A_{\ell-1} \cdots A_1 A_0.$ More generally, 
we use the fact that for all integers $i, j$, we have
\[A_{i + j} = \tau^{i + j}(\mathcal{A})(\xi) = \tau^i\big(\tau^j(\mathcal{A})( 
\xi_{-i})\big)\]
to deduce that  for all $i \ge 1$,
\[\tau^{i}(\mathcal{B}( \xi_{-i})) = A_{i + \ell-1} \cdots A_{i + 1}A_{i} = B_{i,i+\ell}.\]
In particular, for any integer $m$, $B_{0,m\ell}$ can be computed as the product of the  matrices
\[
\tau^{(m-1) \ell}\big(\mathcal{B} ( \xi_{-(m-1)\ell} ) \big), \dots, 
\tau^{\ell}\big(\mathcal{B} ( \xi_{-\ell}) \big), \mathcal{B} \big(\xi\big);
\]
more generally, for any integers $m,u$, we have 
\[
B_{u,u+m\ell}=
\tau^{u+(m-1) \ell}\big(\mathcal{B} ( \xi_{-(u+(m-1) \ell)}) \big) \cdots
\tau^{u+\ell}\big(\mathcal{B} ( \xi_{-(u+\ell)}) \big)
\tau^{u}\big(\mathcal{B} ( \xi_{-u}) \big).
\]
Suppose that we know $B_{0,u}$, for some $u < n$. As in~\cite{ks},
let $\beta$ be an arbitrary constant in $(0,1)$, and define $\ell =
\lceil (n-u)^\beta \rceil$, $m = \lfloor (n-u) / \ell
\rfloor$. Then, $v=u+m\ell$ satisfies $v \le n$ and $n-v \le
(n-u)^\beta$.  The discussion above suggests Algorithm
\ref{alg:hasse-inv} for computing $B_{u,v}$, from which we can
deduce $B_{0,v} = B_{u,v}B_{0,u}$.

\begin{algorithm}[H]
  \caption{Main subroutine for the Hasse invariant}
  \label{alg:hasse-inv}
  \begin{algorithmic}[1]
    \REQUIRE $f$ squarefree in $\F_q[x]$,  $\delta, \gamma \in \K=\F_q[x]/f$, integers $u,n$, with $u < n$
    \ENSURE $B_{u,v}$ as defined above
    \STATE Let $\ell := \lceil (n-u)^\beta \rceil$ and $m := \lfloor (n-u) / \ell \rfloor$
    \STATE\label{step:hasse-2}
    Compute $\mathcal{B} = \tau^{\ell-1}(\mathcal{A}) \cdots \tau(\mathcal{A}) \mathcal{A}$
    \STATE\label{step:hasse-3}
    Compute $\xi_{-(u+i\ell)} = \tau^{-(u+i\ell)}(\xi)$ for $0 \le i < m$
    \STATE\label{step:hasse-4}
    Compute $\beta_i = \mathcal{B}(\xi_{-(u+i\ell)})$ for $0 \le i< m$
    \RETURN\label{step:hasse-5}  the product $\tau^{u+(m-1)\ell}(\beta_{m-1})\cdots \tau^u(\beta_0)$
  \end{algorithmic}
\end{algorithm}

Correctness of the algorithm follows from the preceding remarks.  We
will give two different runtime analyses, in respectively the
algebraic and boolean model, where the main difference lies in the
cost of modular composition.

\begin{itemize}
\item In the whole algorithm, we compute once $\xi_1=\tau(\xi)$; the
  cost is $O(\MM(n)\log(q))$ operations in $\F_q$.

\item A first solution for Step \ref{step:hasse-2} is to compute $
   \tau(\mathcal{A}), \dots,\tau^{\ell-1}(\mathcal{A})$
  (using $\ell-1$ successive applications of the Frobenius) and
  multiply the results. The successive applications of $\tau$ are
  done using repeated squaring, and the subsequent multiplications
  using a $2 \times 2$ matrix version of the subproduct
  tree~\cite[Chapter~10]{vzGG}.  Multiplication of $2 \times 2$
  polynomial matrices in degree $k$ over $\K$ takes $O(\MM(k n))$
  operations in $\K$ (using Kronecker substitution), so the cost is
  $O(\ell\MM(n)\log q + \MM(\ell n)\log \ell)$ operations in $\F_q$.

  An alternative is to perform Step \ref{step:hasse-2}
  recursively: given $\tau^{i-1}(\mathcal{A})\cdots \mathcal{A}$, it
  takes one application of $\tau^{i}$ (resp.\ of $\tau$) and one
  matrix multiplication in $\mathscr{M}_2(\K)[Y]$ to compute
  $\tau^{2i-1}(\mathcal{A})\cdots \mathcal{A}$,
  resp.\ $\tau^{i}(\mathcal{A})\cdots \mathcal{A}$. 
  The power-of-$\tau$ map is computed using the iterated Frobenius
  algorithm of von zur Gathen and Shoup~\cite{gs}. In the algebraic
  model, the cost of modular composition makes this solution inferior
  to the one in the previous paragraph. In the boolean model, since $\xi^q$
  is known, the cost becomes $\ell n^{1+\varepsilon} (\log q)^{1+o(1)}$
  bit operations.
  
\item Items ${\bf 5}$ and ${\bf 6}$ in the previous subsection show
  that Step~\ref{step:hasse-3} and~\ref{step:hasse-5} can be done
  using $O(m^{(\omega-1)/2} n^{(\omega+1)/2})$ operations in $\F_q$,
  resp.\ $m n^{1+\varepsilon} (\log q)^{1+o(1)}$ bit operations.
  
\item Step \ref{step:hasse-4} can be done using multipoint evaluation
  \cite{vzGG}. We are evaluating a $2 \times 2$ polynomial matrix of
  degree at most $\ell$ at $m$ points.  We first build the subproduct
  tree at the given points, then reduce each entry of the matrix
  modulo the root polynomial of the tree, and apply the ``going down the tree''
  procedure of~\cite[Chapter~10]{vzGG}. Altogether, this takes
  $O(\MM(m n) \log m + \MM(\ell n))$ operations in $\F_q$.
\end{itemize}
Altogether, in the algebraic model, we obtain the following result.
\begin{proposition}
  \label{theo:hasse-inv}
  Algorithm \ref{alg:hasse-inv} can be implemented so as to run in
  \[O((n-u)^{(1 - \beta)(\omega - 1) / 2} n^{(\omega + 1) / 2} + \MM((n-u)^{\beta}n)\log qn)\] 
  operations in $\F_q$, where $\omega$ is the matrix multiplication exponent.
\end{proposition}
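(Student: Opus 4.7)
The plan is to sum the cost bounds for each step of Algorithm~\ref{alg:hasse-inv}, using the enumerated subroutines established in the preceding subsection and the definitions $\ell = \lceil (n-u)^\beta \rceil$ and $m = \lfloor (n-u)/\ell \rfloor$, from which I get the bounds $\ell \le (n-u)^\beta + 1$, $m \le (n-u)^{1-\beta}$, and $m\ell \le n-u \le n$. No step involves a new idea; the argument is essentially bookkeeping on top of the cost analyses already tabulated in items~${\bf 1}$--${\bf 6}$.

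First I would account for the one-off precomputation of $\xi_1 = \xi^q$ via item~${\bf 1}$, contributing $O(\MM(n)\log q)$ operations in $\F_q$. Then I would analyse each labelled step in turn. Step~\ref{step:hasse-2} builds $\mathcal{B}$ by repeated applications of $\tau$ to $\mathcal{A}$ followed by a matrix subproduct tree in $\mathscr{M}_2(\K)[Y]$; the cost, as argued for the first solution discussed just before the proposition, is $O(\ell \MM(n)\log q + \MM(\ell n)\log \ell)$, which by superlinearity of $\MM$ collapses into $O(\MM(\ell n)\log(qn))$. Step~\ref{step:hasse-3} is a direct application of item~${\bf 5}$ to $u$, $\ell$ and $m$, costing $O((m^{(\omega-1)/2} + \log n)\, n^{(\omega+1)/2})$. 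Step~\ref{step:hasse-4} is a multipoint evaluation of a degree-$\ell$ polynomial in $\K[Y]$ (applied entrywise to the $2\times 2$ matrix $\mathcal{B}$) at $m$ points of $\K$; running the standard subproduct-tree procedure with Kronecker substitution on each $\K$-arithmetic operation yields a cost of $O(\MM(mn)\log m + \MM(\ell n))$ operations in $\F_q$. Step~\ref{step:hasse-5} is a direct application of item~${\bf 6}$ to the matrices $\beta_0, \dots, \beta_{m-1}$ and costs $O((m^{(\omega-1)/2} + \log n)\, n^{(\omega+1)/2})$.

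Substituting $\ell = O((n-u)^\beta)$ and $m = O((n-u)^{1-\beta})$, the two matrix-exponent contributions coming from Steps~\ref{step:hasse-3} and~\ref{step:hasse-5} combine to $O((n-u)^{(1-\beta)(\omega-1)/2}\, n^{(\omega+1)/2})$ (absorbing $\log n \cdot n^{(\omega+1)/2}$ into this same summand, since $\log n = O(n^{(1-\beta)(\omega-1)/2})$ for $n$ large), and the remaining $\MM$-terms $\MM(n)\log q$, $\MM(\ell n)\log(qn)$, $\MM(\ell n)$ and $\MM(mn)\log m$ fit into $O(\MM((n-u)^\beta n)\log(qn))$ after using superlinearity of $\MM$ to fold the arguments into $(n-u)^\beta n$ and collecting all remaining logarithms into $\log(qn)$. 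The one point that warrants a careful check is the absorption of $\MM(mn)\log m$ into $\MM((n-u)^\beta n)\log(qn)$ when $m$ may exceed $\ell$; this is the only place where I would have to argue carefully with the superlinearity axiom and the inequalities $m\ell\le n$, $mn \le (n/\ell)n$, so as to confirm that no stray factor falls outside the claimed envelope. Beyond this bookkeeping, there is no conceptual obstacle.
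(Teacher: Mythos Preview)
Your proposal is correct and follows essentially the same approach as the paper: the paper's ``proof'' is precisely the bullet-point cost tabulation preceding the proposition (items~{\bf 1}--{\bf 6} applied to Steps~\ref{step:hasse-2}--\ref{step:hasse-5}), after which one substitutes $\ell \approx (n-u)^\beta$, $m \approx (n-u)^{1-\beta}$ and collects terms. Your flagged concern about absorbing $\MM(mn)\log m$ when $\beta < 1/2$ is legitimate bookkeeping that the paper does not spell out either; in the regime actually used ($\beta \approx 0.815 > 1/2$) one has $m \le \ell$ and the issue disappears, while for smaller $\beta$ the term can instead be absorbed into the first summand via $mn \le (n-u)^{1-\beta} n \le (n-u)^{(1-\beta)(\omega-1)/2} n^{(\omega+1)/2}$ (up to negligible polylog).
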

In the boolean model, we set $\beta = 1/2$. Then, each of the steps
\ref{step:hasse-2}, \ref{step:hasse-3}, \ref{step:hasse-4} and
\ref{step:hasse-5} can each be performed in $n^{3/2+\varepsilon} (\log
q)^{1+o(1)}$ bit operations, whereas the initial computation of
$\xi^q$ takes $ n^{1+\varepsilon} (\log q)^{2+o(1)}$ bit operations. To
summarize, we have the following result.
 
\begin{proposition}
  \label{theo:hasse-inv-bit}
  Algorithm \ref{alg:hasse-inv} can be implemented so as to run in
  \[O((n-u)^{1/2} n^{1+\varepsilon} (\log q)^{1+o(1)} + n^{1+\varepsilon} (\log q)^{2+o(1)})\]
  bit operations.
\end{proposition}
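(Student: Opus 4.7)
The plan is to specialize the algebraic analysis already carried out step by step for Algorithm~\ref{alg:hasse-inv}, setting $\beta = 1/2$, and to replace each costly subroutine by its boolean-model counterpart from Kedlaya--Umans, book-keeping carefully where a logarithmic factor in $q$ shows up. With $\beta = 1/2$ we have $\ell = \lceil (n-u)^{1/2}\rceil$ and $m = \lfloor (n-u)/\ell\rfloor$, so $\ell, m \in O((n-u)^{1/2}) \subseteq O(n^{1/2})$. The goal is to show every step except the preprocessing fits inside $(n-u)^{1/2} n^{1+\varepsilon}(\log q)^{1+o(1)}$ bit operations, while the preprocessing contributes the additive $n^{1+\varepsilon}(\log q)^{2+o(1)}$ term.

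First I would bound the preprocessing $\xi_1 = \xi^q$. Computing it by repeated squaring costs $O(\MM(n)\log q)$ operations in $\F_q$, which in the boolean model translates to $\MM(n)\cdot (\log q)^{2+o(1)} = n^{1+\varepsilon}(\log q)^{2+o(1)}$ bit operations, absorbing the standard $(\log q)^{1+o(1)}$ cost per $\F_q$-operation. This is the only place where $(\log q)^{2+o(1)}$ appears.

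Next I would analyze Step~\ref{step:hasse-2}. Here I use the recursive variant that alternates iterated Frobenius and $2\times2$ polynomial-matrix multiplication. Given $\xi_1$, the iterated Frobenius on an entry of degree $<n$ is, by von zur Gathen--Shoup, $O(\log n)$ modular compositions, each costing $n^{1+\varepsilon}(\log q)^{1+o(1)}$ bit operations by Kedlaya--Umans. The matrix products at stage $i$ are four polynomial multiplications over $\K$ of degrees $O(i)$, handled by Kronecker substitution in $\MM(in)$ base-field operations. Summing the $O(\log \ell)$ doubling stages gives a geometric series dominated by stage $\ell$, yielding a total of $\ell\cdot n^{1+\varepsilon}(\log q)^{1+o(1)}$ bit operations, as already noted in the discussion before the proposition.

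For Steps~\ref{step:hasse-3} and~\ref{step:hasse-5}, items~\textbf{5} and~\textbf{6} from \S2.2 give directly $m\cdot n^{1+\varepsilon}(\log q)^{1+o(1)}$ bit operations each; with $m = O((n-u)^{1/2})$ this is within budget. For Step~\ref{step:hasse-4}, I would evaluate each of the four entries of $\mathcal{B}$ (a polynomial of degree $O(\ell)$ with coefficients in $\K$) at $m$ points using the fast multipoint evaluation of \cite{vzGG} built on a subproduct tree, again coupled with Kronecker substitution to multiply polynomials over $\F_q$ of degree $O(\ell n)$ or $O(mn)$. The cost is $O(\MM(mn)\log m + \MM(\ell n))$ operations in $\F_q$; since $m,\ell = O(n^{1/2})$, this amounts to $\MM(n^{3/2})\log n = n^{3/2+o(1)}$ field operations, i.e.\ $n^{3/2+\varepsilon}(\log q)^{1+o(1)}$ bit operations, which matches $(n-u)^{1/2} n^{1+\varepsilon}(\log q)^{1+o(1)}$ in the worst case $u=0$ and is absorbed by it otherwise.

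The only real obstacle is Step~\ref{step:hasse-2}: one has to be careful that the recursive doubling does not incur a second factor of $\log q$ (it doesn't, because $\xi^q$ is computed once up front and iterated Frobenius afterwards is pure modular composition), and that the matrix-polynomial multiplications telescope into a geometric series bounded by the last stage. Combining the five bounds yields the advertised total of $O((n-u)^{1/2} n^{1+\varepsilon}(\log q)^{1+o(1)} + n^{1+\varepsilon}(\log q)^{2+o(1)})$ bit operations.
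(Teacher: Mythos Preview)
Your proposal is correct and follows exactly the paper's approach: set $\beta=1/2$, bound each step of Algorithm~\ref{alg:hasse-inv} via the Kedlaya--Umans modular-composition cost, and isolate the $(\log q)^{2+o(1)}$ term as coming solely from the one-time computation of $\xi^q$. One small wording slip: in your analysis of Step~\ref{step:hasse-4} you should keep the sharper bound $m,\ell = O((n-u)^{1/2})$ rather than relaxing to $O(n^{1/2})$, since $n^{3/2+\varepsilon}$ is \emph{not} absorbed by $(n-u)^{1/2}n^{1+\varepsilon}$ when $u>0$; with the tighter bound the estimate $\MM(mn)\log m + \MM(\ell n) = (n-u)^{1/2}\, n^{1+o(1)}$ field operations follows directly and matches the proposition.
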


To conclude the proof of Theorem~\ref{theo:1}, we apply the previous
results with input $u=0$ and $n$, and obtain $B_{0,v}$, for $v$ as
defined previously. If $v=n$, we are done, otherwise we re-enter the
algorithm with input $v$ and $n$ (so that $n-v \le
n^\beta$), and so on. Starting from Proposition~\ref{theo:hasse-inv},
we see that the cost of the first call dominates the
overall runtime, so the first item in Theorem~\ref{theo:1} is proved.
Using Proposition~\ref{theo:hasse-inv-bit}, the situation is similar,
up to the total contribution of the second term in the sum, since a factor
$\log\log n$ appears; however, we can absorb it in the $1+\varepsilon$ 
exponent and the second claim in Theorem~\ref{theo:1} follows.

%####################################################
%####################################################
%####################################################

\section{A polynomial factoring algorithm}\label{randomized_section}

We can now give our elliptic module algorithm for polynomial
factorization. As input, we are given $f \in\F_q[x]$ of degree $n$;
without loss of generality, we may assume that $f$ is
squarefree~\cite{knu,yun}, that is, does not contain a square of an
irreducible polynomial as a factor.

Let $\phi: \F_q[x] \to \F_q(x)\ang{\tau}$ be an elliptic
module over $\F_q(x)$ and suppose that $\gcd(f,\Delta_\phi)=1$. 
 The following lemma is the key of our algorithm.
\begin{lemma}\label{lemma:split}
  Let $g$ be an irreducible factor of $f$. Then $\phi$ has supersingular
  reduction at $g$ if and only if $\bar h_{\phi,f} \bmod g=0$.
\end{lemma}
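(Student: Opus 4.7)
The plan is to work modulo the fixed irreducible factor $g$ of $f$. Setting $k := \deg g$, $n := \deg f$, $\K := \F_q[x]/g$, $\xi := x \bmod g$, and $\rho_j := r_{\phi,j}(\xi) \in \K$, the key observation is that since $g \mid f$ the condition $\bar h_{\phi,f} \equiv 0 \bmod g$ is equivalent to $g$ dividing both $r_{\phi,n} \bmod f$ and $r_{\phi,n+1} \bmod f$, i.e., to $\rho_n = \rho_{n+1} = 0$. By Deligne's congruence \eqref{deligne_congruence}, $\rho_k = h_{\phi,g}(\xi)$, so $\phi$ is supersingular at $g$ exactly when $\rho_k = 0$. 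The lemma thus reduces to the equivalence
\[ \rho_k = 0 \iff \rho_n = \rho_{n+1} = 0, \]
where $(\rho_j)$ satisfies the recurrence obtained by evaluating \eqref{eisenstein_recurrence} at $\xi$, with $\rho_0 = 1$ and $\rho_1 = \gamma := g_\phi(\xi)$. We also set $\delta := \Delta_\phi(\xi)$, which is nonzero since $\gcd(f, \Delta_\phi) = 1$.

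The forward implication is immediate. Because $|\K| = q^k$, the Frobenius $\tau$ satisfies $\tau^k = \mathrm{id}_\K$, so the factor $(\xi^{q^k} - \xi)$ appearing in the recurrence for $\rho_{k+1}$ vanishes, giving $\rho_{k+1} \equiv \gamma^{q^k} \rho_k = \gamma \rho_k \pmod g$. If $\rho_k = 0$ then $\rho_{k+1} = 0$, and a trivial induction using the recurrence propagates vanishing to every $\rho_j$ with $j \ge k$; in particular $\rho_n = \rho_{n+1} = 0$.

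The reverse direction is where the real work lies, and I would handle it by first establishing the clean multiplicative identity
\[ \rho_{j+k} = \rho_k \, \rho_j \qquad \text{for all } j \ge 0. \]
The route is to package the recurrence as $v_j = M_j v_{j-1}$ with $v_j := (\rho_{j-1}, \rho_j)^\top$ and $M_j$ the associated $2 \times 2$ companion matrix whose entries involve $\tau^{j-1}(\xi)$, $\tau^{j-1}(\gamma)$, $\tau^{j-2}(\delta)$. The identity $\tau^k = \mathrm{id}_\K$ forces the periodicity $M_{j+k} = M_j$ for all $j \ge 2$; combined with the forward computation $v_{k+1} = (\rho_k, \gamma \rho_k)^\top = \rho_k \cdot v_1$, this exhibits $v_1 = (1, \gamma)^\top$ as an eigenvector of the one-period propagator $\Pi := M_{k+1} \cdots M_2$ with eigenvalue $\rho_k$. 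Iterating $\Pi$ and interleaving with fewer-than-$k$ applications of the recurrence yields $v_{j+k} = \rho_k v_j$, hence the displayed identity. This is the step I expect to be the main obstacle: finding the right packaging (companion matrices and the eigenvector observation) is what makes the argument go through.

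With the identity in hand, the finish is a short case analysis. Write $n = ik + r$ with $0 \le r < k$ and $i \ge 1$ (since $n \ge k$). Then $\rho_n = \rho_k^i \rho_r$, and $\rho_{n+1}$ equals either $\rho_k^i \rho_{r+1}$ (if $r < k-1$) or $\rho_k^{i+1}$ (if $r = k-1$). In the boundary subcase $\rho_k^{i+1} = 0$ immediately gives $\rho_k = 0$ since $\K$ is a field. In the other subcase, if $\rho_k \ne 0$ then we would need $\rho_r = \rho_{r+1} = 0$ for some $0 \le r \le k-2$; but the matrices $M_2, \ldots, M_{r+1}$ all have nonzero determinant $(\tau^{j-1}(\xi) - \xi)\tau^{j-2}(\delta)$ (since $k \nmid j-1$ and $\delta \ne 0$ in this range), so running the recurrence backward from $v_{r+1} = 0$ would force $v_1 = 0$, contradicting $\rho_0 = 1$. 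Hence $\rho_k = 0$ in every case, completing the proof.
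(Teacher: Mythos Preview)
Your proof is correct. The forward direction matches the paper's exactly. For the reverse direction, both arguments hinge on the same multiplicative identity $\rho_{j+k}=\rho_k\rho_j$ in $\K=\F_q[x]/g$, but you arrive at it and use it differently.

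The paper cites this identity from \cite[Lemma~2.3]{cor} (in the form $r_{\phi,m}\equiv r_{\phi,k}\,r_{\phi,m-k}^{q^k}\bmod g$, which after applying $\tau^k=\mathrm{id}$ is exactly your relation) and then runs a minimal-index descent: take the smallest $m>k$ with $\rho_m=\rho_{m+1}=0$, use the recurrence to force either $\rho_{m-1}=0$ (hence $m-1=k$ by minimality) or $k\mid m$, and in the latter case invoke the identity to drop from $m$ to $m-k$, again forcing $m-k=k$. Your route instead \emph{proves} the identity from scratch via the companion-matrix packaging $v_{j}=M_j v_{j-1}$, observing periodicity $M_{j+k}=M_j$ and that $v_1$ is an eigenvector of the one-period product with eigenvalue $\rho_k$; then you apply Euclidean division $n=ik+r$ and use invertibility of $M_2,\dots,M_{r+1}$ to exclude $\rho_r=\rho_{r+1}=0$ when $r\le k-2$.

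What each approach buys: yours is self-contained (no external citation) and dovetails nicely with the matrix framework the paper already uses in \S\ref{drinfeld_section} for the algorithm; the invertibility step is essentially the backward version of the paper's observation that $(x^{q^{j-1}}-x)$ is nonzero modulo $g$ unless $k\mid j-1$. The paper's descent is slightly more elementary once the identity is granted, but relies on the external reference for that key step. Both are short and equally valid.
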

\begin{proof}
  First, note that by assumption, $\phi$ has good reduction at
  $g$. Let $k:=\deg(g)$ and $n:=\deg(f)$ and for $j \ge 0$ let
  $r_{\phi,j} \in \F_q[x]$ be the sequence as in
  recurrence~\eqref{eisenstein_recurrence}, defined with respect to
  $\phi$.
	
  Assume $\phi$ has supersingular reduction at $g$, that is,
  $h_{\phi,g}=0$ (recall that this is a polynomial defined modulo
  $g$). By Deligne's congruence, $r_{\phi,k} \bmod g = 0$.  Since $g$
  divides $x^{q^k}-x$, the recurrence
  relation~\eqref{eisenstein_recurrence} implies $r_{\phi,k+1} \bmod g
  = 0$. Since this recurrence has order~2, this further yields
  \begin{equation}\label{supersingular_zero}
    r_{\phi,j}  \bmod g= 0,\quad j \geq k.
  \end{equation}
  In particular, $r_{\phi,n} \bmod g= 0$ and $r_{\phi,n+1} \bmod g= 0$;
  this implies $\bar h_{\phi,f} \bmod g = 0$, since $g$ divides $f$.

  Conversely, assume $\bar h_{\phi,f}  \bmod g= 0$. To prove that
  $\phi$ has supersingular reduction at $g$, using Deligne's
  congruence, it suffices to show $r_{\phi,k} \bmod g=0$.  Since $\bar
  h_{\phi,f} \bmod g = 0$ and $g$ divides $f$, we have $r_{\phi,n}
  \bmod g = r_{\phi,n+1} \bmod g=0$. In particular, if $g=f$,
  we are done, since then $k=n$.

  Else, let $m$ be
  the smallest positive integer greater than $k$ such that $r_{\phi,m}
  \bmod g = 0$ and $r_{\phi,m+1} \bmod g =0$; such an $m$ exists, 
  since then $k < n$. From the
  recurrence \eqref{eisenstein_recurrence},
\[
r_{\phi,m+1}  \mod g = g_\phi^{q^m} r_{\phi,m} - \left(x^{q^m}-x\right) \Delta_{\phi}^{q^{m-1}} r_{\phi,m-1} \mod g;
\]
hence,
\[
\left(x^{q^m}-x\right) \Delta_{\phi}^{q^{m-1}} r_{\phi,m-1} \mod g = 0.
\]
Since $\phi$ has good reduction at $\phi$, we may conclude that either
$r_{\phi,m-1}\mod g=0$ or $k$ divides $m$. If $r_{\phi,m-1}=0 \bmod g$, then
by the minimality of $m$, we may conclude $m-1=k$, that is,
$r_{\phi,k}\mod g = 0$, proving our claim. If $k$ divides $m$, then
$m-k \geq k$ (since $m > k$). By~\cite[Lemma~2.3]{cor},
\[
r_{\phi,m}  \bmod g= r_{\phi,k} r_{\phi,m-k}^{q^k} \bmod g,\quad r_{\phi,m+1}  \bmod g = r_{\phi,k} r_{\phi,m+1-k}^{q^k} \bmod g,
\]
implying either $r_{\phi,k} \mod g = 0$ (proving our claim) or
$r_{\phi,m+1-k} \mod g =r_{\phi,m-k} \mod g =0$. Since $m-k
\geq k$, for the latter case to not contradict the minimality of $m$,
$m-k$ has to equal $k$, implying $r_{\phi,k} \mod g=0$, thereby proving
our claim.
\end{proof}

This suggests that we could use an elliptic module $\phi$ in a
polynomial factorization algorithm to separate supersingular primes
from those that are not. For most elliptic modules, the density of
supersingular primes is too small for this to work. However, for a
special class, elliptic modules with complex multiplication, the
density of supersingular primes is $1/2$.

Let $L=\F_q(x)(\sqrt{d})$ be a quadratic extension of $\F_q(x)$, for
some polynomial $d$ in $\F_q[x]$.  We say that $L$ is {\em imaginary}
if the prime $(1/x) \in \F_q(x)$ at infinity does not split in $L$;
this is in particular the case if $d$ is squarefree of odd degree (in which case the 
prime at infinity ramifies~\cite[Proposition~14.6]{Rosen02}).

An elliptic module $\phi$ over $\F_q(x)$ is said to have {\em complex
  multiplication} by an imaginary quadratic extension $L/\F_q(x)$ if
$\End_{\F_q(x)}(\phi)\otimes_{\F_q[x]} \F_q(x)$ is isomorphic to $L$,
where $\End_{\F_q(x)}(\phi)$ is the {\em ring of endomorphisms} of $\phi$,
that is, the ring of all elements of $\F_q(x)\{\tau\}$ that commute
with $\phi_x$. For $\phi$ with complex multiplication by $L/\F_q(x)$
and an irreducible polynomial $f \in \F_q[x]$ such that $\langle f
\rangle$ is unramified in $L/\F_q(x)$, $f$ is supersingular if and
only if it $\langle f \rangle$ is inert in $L/\F_q(x)$.

This suggests the following strategy to factor a monic squarefree
polynomial $f \in \F_q[x]$. Say $f$ factors into monic irreducibles as
$f = \prod_i f_i$. Pick an elliptic module $\phi$ with complex
multiplication by some imaginary quadratic extension $L/\F_q(x)$ and
compute $\bar h_{\phi,f}$. By Lemma~\ref{lemma:split}, we get that
\begin{equation}
\label{equ:fact-sep}
\gcd(\bar h_{\phi,f}, f) =
\prod_{\langle f_i \rangle \text{~inert in~} L/\F_q(x)}f_i
\end{equation}
is a factor of $f$. Since for every degree, roughly half the primes of
that degree are inert in $L/\F_q(x)$, the factorization thus obtained
is likely to be non trivial. Repeating the process for the resulting
factors leads to a complete factorization of $f$.\\

\noindent It remains to construct elliptic modules with complex
multiplication. Our strategy is to pick an $a \in \F_q$ at random and
construct an elliptic module $\phi$ with complex multiplication by the
imaginary quadratic extension $\F_q(x)(\sqrt{d})$ of discriminant
$d:=x-a$. From \cite{dor} (see also \cite[Theorem 6]{Schweizer97}), the
elliptic module $\phi^\prime$ with
\[ g_{\phi^\prime}(x):=\sqrt{d}+\sqrt{d}^q,\quad \D_{\phi^\prime} := 1 \]
has complex multiplication by $\F_q(x)(\sqrt{d})$. However,
$\phi^\prime$ has the disadvantage of not being defined over
$\F_q[x]$, since $g_{\phi^\prime}$ is not in $\F_q[x]$. We construct
an alternate $\phi$ that is isomorphic to $\phi^\prime$ but defined
over $\F_q[x]$. There is a notion of $J$-invariant for elliptic
modules (see e.g.~\cite{gek}); the $J$-invariant of $\phi^\prime$ is
\[ J_{\phi^\prime} := \frac{g_{\phi^\prime}^{q+1}}{\D_{\phi^\prime}} = 
d^{\frac{q+1}{2}}\left(1+d^{\frac{q-1}{2}}\right)^{q+1}. \]

Now let $\phi$ be the elliptic module defined by 
\[g_\phi = J_{\phi^\prime}, \quad  \D_\phi = J_{\phi^\prime}^{q} \]
The $J$-invariants of $\phi$ and $\phi'$ are the same, which implies that $\phi'$
and $\phi$ are isomorphic. Further, $\phi$ is defined over
$\F_q[x]$. In summary, $\phi$ has complex multiplication by
$\F_q(x)(\sqrt{d})$ and is defined over $\F_q[x]$.

\begin{remark}
When $q$ is even, the construction of Schweizer~\cite[Theorem
  6]{Schweizer97} gives the required elliptic modules with complex
multiplication. Pick an $(a,b)\in \F_q^\times \times \F_q$ at random
and set $d=ax+b$. The elliptic module $\phi$ with $J$-invariant
\[
J_{\phi} = \left(1+d+\ldots+d^{2^{m-2}}+d^{2^{m-1}}\right)^{q+1}/a
\]
has complex multiplication by the ring $\F_q[x][w]$ in the
Artin-Schreier extension $\F_q(x)(w)$, where $w^2+w=d$. We can take $\phi$ to
be the elliptic module defined by
\[g_\phi = J_{\phi}, \quad  \D_\phi = J_{\phi}^{q}, \]
ensuring $\phi$ is defined over $\F_q[x]$. For a further discussion
of characteristic 2, see Remark~\ref{rk:obstruction} below. 
\end{remark}

\noindent We now state our randomized algorithm to factor polynomials over
finite fields using elliptic modules with complex
multiplication. 

\begin{algorithm}[H]
  \caption{Polynomial factorization}
  \label{alg:factoring}
  \begin{algorithmic}[1]
    \REQUIRE Monic squarefree $f \in \F_q[x]$ of degree $n$
    \ENSURE The irreducible factors of $f$
    \STATE If $f$ is irreducible then output $f$ and return
    \STATE\label{step:fac-2}
    Remove the linear factors of $f$ and output them
    \STATE\label{step:fac-3}
    Pick $a \in \F_q$ uniformly at random and compute \\
    $d := x - a$,\\
    $J:=d^{\frac{q+1}{2}}\left(1+d^{\frac{q-1}{2}}\right)^{q+1} \bmod f$, \\
	$g_\phi:=J \bmod f$ and	$\Delta_\phi := J^q \bmod f$\\	
%    $\gamma := g_\phi \bmod f$, with $g_\phi:=d(1+d^{\frac{q-1}{2}})^2$ \\
%    $\delta := \Delta_\phi \bmod f$, with $\Delta_\phi:=d^{\frac{q+1}{2}}(1+d^{\frac{q-1}{2}})^{q+1}$
    %% \STATE\label{step:fac-4}
    %% Compute $r_{\phi, n'} \bmod f$ as in recurrence 
    %% \eqref{eisenstein_recurrence}, with $n'=\deg(f)$
    \STATE\label{step:fac-split}
    Compute $\gamma := \gcd(\bar h_{\phi,f}, f)$ and recursively factor $\gamma$ and $f/\gamma$.
  \end{algorithmic}
\end{algorithm}

%% \begin{remark}
%%   In Step $4$, we may compute $r_{\phi,k} \bmod f$ for some $k\geq n'/2$
%%   instead of $r_{\phi,n'} \bmod f$; the rest of the algorithm would run
%%   identically. For instance, if we use the algorithm of the previous
%%   section to compute $r_{\phi,k} \bmod f$, it might be convenient to
%%   choose $k$ so that only one call is made in the main loop of that
%%   algorithm.
%% \end{remark}

The irreducibility test in Step $1$ can be performed in
$O(n^{(\omega+1)/2}(\log n)^2 + \MM(n)\log q)$ operations in
$\F_q$~\cite{vzGG}, or $n^{1+\varepsilon} (\log q)^{2+o(1)}$ bit
operations. In Step $2$, all the linear factors of $f$ are found and
removed using a root finding algorithm; it takes an expected
$O(\MM(n)\log n \log(nq))$ operations in $\F_q$~\cite{vzGG}, or
$n^{1+\varepsilon} (\log q)^{2+o(1)}$ bit operations (this step needs
only be done once in the whole algorithm).

In Step $3$, we choose $a \in \F_q$ at random and construct a Drinfeld
module $\phi$ with complex multiplication by
$\F_q(x)(\sqrt{x-a})$. The primes that divide $\D_\phi$ are
precisely $\{(x-b), b \in \F_q, \sqrt{b-a} \notin \F_q\} \cup
\{x-a\}$. Hence, we might have run into issues of bad reduction if $f$
had linear factors; it is to prevent this that we performed root finding
in Step~\ref{step:fac-2}. The computation of $g_\phi$ and $\Delta_\phi$ takes
$O(\MM(n)\log q)$ operations in $\F_q$.

As mentioned in Eq.~\eqref{equ:fact-sep}, $\gcd(\bar h_{\phi, f}, f)$
is the product of all irreducible factors of degree greater than $1$
of $f$ that are supersingular with respect to $\phi$. Thus, our
algorithm separates the irreducible factors supported at the
supersingular primes from those supported at the ordinary primes. In
the following, we show that for an elliptic module chosen randomly as
in Step \ref{step:fac-3}, the splitting of $f$ in Step
\ref{step:fac-split} is random enough to ensure that the recursion
depth is $O(\log n)$.

We assume as in the statement of Theorem~\ref{theo:main-factor} that
$\bar h_{\phi, f}$ is computed using $H(n,q)$ operations in $\F_q$,
resp.\ $H^*(n,q)$ bit operations. Since the expected depth of the
recursion is logarithmic, using the superlinearity assumptions on $H$,
resp.\ $H^*$ made in that theorem, the runtime of the algorithm is then an expected
\begin{itemize}
\item $\tildO(H(n,q)  + n^{(\omega+1)/2} + \MM(n)\log q )$ operations in $\F_q$, or
\item $\tildO(H^*(n,q)) + n^{1+\varepsilon}(\log q)^{2+o(1)}$ bit operations;
\end{itemize}
this proves Theorem~\ref{theo:main-factor}. 

In the following lemma, we establish the claim above on the
probability of finding a nontrivial factor of $f$. For the lemma to
apply to the algorithm, we need to assume $\sqrt{q} \ge 12(n+2)$.
This assumption can be made without loss of generality: if $\sqrt{q} <
12(n+2)$, we might choose to factor over a slightly larger field
$\F_{q^\prime}$ where $q^\prime$ is the smallest power of $q$ such
that $\sqrt{q^\prime} > 12(n+2)$ and still recover the factorization
over $\F_q$ (c.f. \cite[Remark 3.2]{nar}). Further, the running times
are only affected by logarithmic factors in $n$. 

\begin{lemma}
  \label{splitting_lemma}
  Suppose that $f$ is not irreducible, without linear factors, and that $12(n+2) \le
  \sqrt{q}$. Let $\phi$ be an elliptic module with complex
  multiplication by the imaginary quadratic extension
  $\F_q(x)(\sqrt{x-a})$, where $a \in \F_q$ is chosen at random. Then,
  with probability at least $1/4$, the factor $h$ computed at
  Step~\ref{step:fac-split} of the algorithm is non-trivial.
\end{lemma}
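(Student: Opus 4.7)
The strategy would be to show that a non-trivial factor is missed only when all irreducible factors of $f$ share the same inert/split behavior in $L := \F_q(x)(\sqrt{x-a})$, and to control this coincidence by a Weil character-sum bound. Write $f = f_1 \cdots f_s$ with $s \ge 2$ (since $f$ is reducible) and $d_i := \deg(f_i) \ge 2$ (since $f$ has no linear factors by Step~\ref{step:fac-2}). By Lemma~\ref{lemma:split}, Eq.~\eqref{equ:fact-sep}, and the CM characterization of supersingularity recalled above, $h = \prod_{\langle f_i \rangle \text{ inert in } L} f_i$; unramifiedness of each $\langle f_i \rangle$ in $L/\F_q(x)$ is automatic since $f_i \neq x-a$ (as $\deg f_i \ge 2$) and $\langle x - a\rangle$ is the only finite ramified prime. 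Thus $h$ fails to be a non-trivial factor of $f$ exactly when the vector of splitting signs $(\epsilon_1(a), \dots, \epsilon_s(a)) \in \{-1,+1\}^s$ is constant, and it suffices to fix any two indices $i \neq j$ and show $\Pr_a[\epsilon_i(a) \neq \epsilon_j(a)] \ge 1/4$.

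The next step is to translate splitting into a quadratic character on $\F_q$ via the norm. The prime $\langle f_i \rangle$ splits in $L$ iff $x-a$ is a square in $L_i := \F_q[x]/f_i \cong \F_{q^{d_i}}$. Since the norm $N_{L_i/\F_q}: L_i^\times \to \F_q^\times$ is surjective and composing it with the quadratic character $\eta$ of $\F_q$ yields a nontrivial index-$2$ character of $L_i^\times$, which must be the quadratic character of $L_i$, squareness of $x-a$ is equivalent to $\eta(N_{L_i/\F_q}(x-a)) = 1$. Using $N_{L_i/\F_q}(x-a) = \prod_{j}(\alpha_j - a) = (-1)^{d_i} f_i(a)$ where the $\alpha_j$ are the Galois conjugates of $x$ in $L_i$ (and $f_i(a) \neq 0$ since $\deg f_i \ge 2$), one obtains $\epsilon_i(a) = c_i\, \eta(f_i(a))$ with $c_i := \eta(-1)^{d_i}$, hence $\epsilon_i(a)\epsilon_j(a) = c_i c_j\, \eta\bigl((f_i f_j)(a)\bigr)$.

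Since $f_i, f_j$ are distinct monic irreducibles, $f_i f_j$ is squarefree of degree at most $n$, hence not a square in $\overline{\F_q}[x]$, and the Weil bound would give $\bigl|\sum_{a \in \F_q} \eta((f_i f_j)(a))\bigr| \le (n-1)\sqrt{q}$. A direct count then yields
\[\Pr_a[\epsilon_i(a) \neq \epsilon_j(a)] \ge \frac{1}{2} - \frac{n-1}{2\sqrt{q}},\]
which under the hypothesis $\sqrt{q} \ge 12(n+2)$ exceeds $1/2 - 1/24 > 1/4$. The only remaining piece of the plan is to verify that $\phi$ has good reduction at every $f_i$ so that $\bar h_{\phi, f}$ is defined and Lemma~\ref{lemma:split} applies: from $\Delta_\phi = J^q$ with $J = (x-a)^{(q+1)/2}(1 + (x-a)^{(q-1)/2})^{q+1}$, any divisibility $f_i \mid \Delta_\phi$ would force $f_i \mid 1 + (x-a)^{(q-1)/2}$, hence $(x-a)^{q-1} \equiv 1 \pmod{f_i}$, placing $x - a$ in the unique subgroup of $L_i^\times$ of order $q-1$, namely $\F_q^\times$; this would force $\deg f_i = 1$, a contradiction. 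The main conceptual step of the plan is the norm/quadratic-character identification and the consequent reduction to a single $\F_q$ character sum; the rest is bookkeeping of the signs $c_i$ and the standard application of Weil's bound.
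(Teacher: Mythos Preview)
Your proof is correct and takes a genuinely different route from the paper's. The paper invokes quadratic reciprocity over function fields to swap the roles of $\langle x-a\rangle$ and $\langle f_i\rangle$, then works in the compositum $K_1K_2=\F_q(x)(\sqrt{f_1},\sqrt{f_2})$: it applies an effective Chebotarev density theorem to count degree-one primes with prescribed Artin symbol, and bounds the genus $g(K_1K_2)$ via Riemann--Hurwitz to control the error term, obtaining the lower bound $\tfrac12 - 3(n+2)/\sqrt{q}$. You instead identify the splitting sign directly as $\epsilon_i(a)=\eta\bigl((-1)^{d_i}f_i(a)\bigr)$ through the norm map, so that $\epsilon_i(a)\epsilon_j(a)=\pm\,\eta\bigl((f_if_j)(a)\bigr)$ and a single Weil character-sum estimate over $\F_q$ finishes the job. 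Your argument is more elementary (no reciprocity, no Chebotarev, no genus computation), slightly sharper in the constant ($\tfrac12-(n-1)/(2\sqrt{q})$ versus $\tfrac12-3(n+2)/\sqrt{q}$), and also makes the good-reduction verification self-contained. Both approaches ultimately rest on the Riemann hypothesis for curves over finite fields, but yours packages it as the classical Weil bound for $\sum_a\eta(g(a))$ rather than through the Chebotarev machinery.
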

\begin{proof}
  Suppose that $f$ admits two distinct monic irreducible factors $f_1$
  and $f_2$, of respective degrees $k_1,k_2 > 1$. We prove that with
  probability at least $1/4$, exactly one of $f_1$ or $f_2$ is
  supersingular with respect to a randomly chosen $\phi$, as
  constructed above.

  Since $k_1,k_2$ are greater than $1$, none of $\langle f_1\rangle,
  \langle f_2\rangle$ ramify in $\F_q(x)(\sqrt{x-a})$. Therefore, the
  probability that exactly one of $\langle f_1\rangle, \langle
  f_2\rangle$ is supersingular with respect to $\phi$ is the same as
  the probability that exactly one of them splits in
  $\F_q(x)(\sqrt{x-a})/\F_q(x)$.
	
  For $i = 1, 2$, let $K_i:=\F_q(x)(\alpha_i)$ be the hyperelliptic
  extension of $\F_q(x)$ obtained by adjoining a root $\alpha_i$ of
  $y^2-f_i$. Depending on the values of $q$ and $k_1,k_2$, by
  quadratic reciprocity over function fields \cite{carlitz1932},
  one of the following alternative holds:
  \begin{enumerate}[leftmargin = \leftmargini]
  \item[(a)] exactly one of $\langle f_1\rangle, \langle f_2\rangle$ splits
    in $\F_q(x)(\sqrt{x-a})$ if and only if $x-a$ splits in exactly
    one of $K_1, K_2$;
  \item[(b)] exactly one of $\langle f_1\rangle, \langle f_2\rangle$
    splits in $\F_q(x)(\sqrt{x-a})$ if and only if $x-a$     either splits  in
 $K_1$ and $K_2$, or splits in none of them.
  \end{enumerate}
  We are in case (a) if $q =1 \bmod 4$ or $k_1 = k_2 \bmod 2$, and
  in case (b) for all other values of the parameters. Since $f_1, f_2$ are distinct,
  $K_1$ and $K_2$ are linearly disjoint over $\F_q(x)$. Further,
  $K_1K_2$ is Galois over $\F_q(x)$ with
  \[ \gal(K_1K_2/\F_q(x)) \cong \gal(K_1/\F_q(x)) \times \gal(K_2/\F_q(x)) \cong \Z/2\Z \oplus \Z/2\Z. \]  
  For $\langle x-a\rangle $ to be neither totally split nor totally inert
  (case (a)), the Artin symbol 
  \[ (\langle x-a\rangle, K_1K_2/\F_q(x)) \in \gal(K_1K_2/\F_q(x)) \]
  has to be either $(0,1)$ or $(1,0)$ under the isomorphism
  $\gal(K_1K_2/\F_q(x)) \cong \Z/2\Z \oplus \Z/2\Z$; case (b) occurs
  when its value is either $(0,0)$ or $(1,1)$.  Applying the effective
  Chebotarev density theorem over function fields~\cite[Proposition~6.4.8]{FrJa08},
  for any $\eta=(\eta_1,\eta_2) \in \Z/2\Z \oplus \Z/2\Z$, 
  we obtain that the number $N_\eta$ of degree 1 primes in $\F_q(x)$ that are unramified in $K_1K_2$ 
  and of symbol $\eta$ satisfies
  \begin{align*}
  \left|N_\eta  - \frac{q}{4} \right| & \le  \frac 12 \left ( (4+g(K_1K_2)) q^{1/2} + 4 q^{1/4}  + (4 + g(K_1K_2)) \right)  
   %% &\le \frac 32 (4+g(K_1 K_2)) q^{1/2},
  \end{align*}
  where $g(K_1K_2)$ is the genus of $K_1K_2$. Taking into account the
  prime at infinity, we obtain that in case (a), the number $N_{\rm (a)}$ of degree one
  primes $\{\langle x-a\rangle, a \in \F_q\}$ that are neither totally
  inert nor totally split in $K_1K_2$, and that in case (b), the number $N_{\rm (b)}$ of degree
  one primes $\{\langle x-a\rangle, a \in \F_q\}$ that are either
  totally inert or totally split in $K_1K_2$ satisfy
  $$ N_{\rm (a)} \ge N_{(0,1)} + N_{(1,0)} -1, \quad N_{\rm (b)} \ge N_{(0,0)} + N_{(1,1)} -1.$$
  The previous inequalities imply
  \begin{align*}
N_{\rm (a)}, N_{\rm (b)} & \ge \frac 12 q - \left ( (4+g(K_1K_2)) q^{1/2} + 4 q^{1/4}  + (5 + g(K_1K_2)) \right)  \\
&\ge \frac 12 q - 3(4+g(K_1K_2)) q^{1/2}.
  \end{align*}
  Using the Riemann-Hurwitz genus
  formula~\cite[Theorem~7.16]{Rosen02}, we now prove $g(K_1K_2) \le
  k_1 + k_2 -2$. Indeed, after replacing $\F_q$ by a suitable
  algebraic extension $\F_{q'}$, the formula reads $2 g(K_1 K_2)-2 =
  4(0-2) + \sum_\alpha (e(\alpha)-1)$, where the sum is over all
  points $\alpha \in \P^3(\F_{q'})$ lying on the projective closure
  $C$ of the curve $V(y_1^2-f_1(x),y_2^2-f_2(x))$ at which the
  first-factor projection $C \to \P^1(\F_{q'})$ ramifies (here,
  $e(\alpha)$ is the ramification index). All roots of either $f_1$ or
  $f_2$ (which are all pairwise distinct) give two such points, each
  of them with ramification index $2$; the contribution of the point
  at infinity in $\P^1(\F_{q'})$ is at most $4-1=3$, so that we have
  $2 g(K_1 K_2)-2 \le -8 + 2(k_1 + k_2) + 3$, which implies our claim.

  Using the inequality $k_1 + k_2 \le n$, we deduce
  $$N_{\rm (a)}, N_{\rm (b)} \ge \frac 12 q - 3(n+2) q^{1/2},$$ 
  so that in either case, the probability of finding a non-trivial factorization
  of $f$ is at least $1/2- 3(n+2)/q^{1/2}$.
  Since $12(n+2)  \le q^{1/2}$, this probability is at least $1/4$.
\end{proof}

\begin{remark} \label{rk:obstruction}
There is an obstruction to an even characteristic analogue of Lemma
\ref{splitting_lemma} being true. Let $q$ be even and suppose $\phi$
is an elliptic module with complex multiplication by the ring
$\F_q[x][w]$ in an Artin-Schreier extension, where $(a,b) \in
\F_q^\times \times \F_q$ is chosen at random and $w^2+w=ax+b$. Let $f$
have distinct monic irreducible factors $f_1,f_2$ of respective
degrees $k_1>1,k_2>1$.  Since $k_1,k_2$ are greater than $1$, none of
$\langle f_1\rangle, \langle f_2\rangle$ ramify in
$\F_q(x)(w)$. Therefore, the probability that exactly one of $\langle
f_1\rangle, \langle f_2\rangle$ is supersingular with respect to
$\phi$ is the same as the probability that exactly one of them splits
in $\F_q(x)(w)/\F_q(x)$. The latter is equivalent to exactly one of
the two equations
\begin{equation}\label{eq:char_2_congruence}
  w^2+w = ax+b \bmod f_1(x),\ w^2+w = ax+b \bmod f_2(x)
\end{equation}
possessing a solution $w \in \F_q[x]$. Analogous to the proof of Lemma
\ref{splitting_lemma}, one may look to reciprocity laws for
Artin-Schreier extensions to phrase this condition in terms of
polynomial congruences involving $f_1$ and $f_2$ modulo
$ax+b$. However, the reciprocity law (due to Hasse \cite{Hasse34}, see
also \cite[Theorem 4.2]{Conrad}) dictates that the existence of
solutions to the equations \eqref{eq:char_2_congruence} depends not on
congruences modulo $ax+b$ but only on the $(k_1-1)^{th}$ degree
(resp. $(k_2-1)^{th}$) coefficient of $f_1$ (resp. $f_2$) and the
parity of the degrees $k_1$ and $k_2$. In particular, by
\cite[Theorem~4.2]{Conrad} or \cite[Example~4.1]{Conrad}, if $k_1$ and
$k_2$ have the same parity and the $(k_1-1)^{th}$ coefficient of $f_1$
equals the $(k_2-1)^{th}$ degree coefficient of $f_2$, then for no
$(a,b) \in \F_q^\times \times \F_q$ does exactly one of the equations
in \eqref{eq:char_2_congruence} has a solution. Hence, our algorithm
cannot distinguish the factors $f_1$ and $f_2$. It remains an open
question as to if our algorithmic framework can be modified to handle
this situation. 

On the other hand, we may argue that this is the only obstruction. Say
$f_1 = x^{k_1}+c_1x^{k_1-1}+\cdots$ and $f_2 =
x^{k_2}+c_2x^{k_2-1}+\cdots$, where $c_1 \neq c_2$. Pick an elliptic
module $\phi$ with complex multiplication by the ring $\F_q[x][w]$ in
an Artin-Schreier extension, where $a \in \F_q^\times$ is chosen at
random and $w^2+w=ax$. By \cite[Example 4.1]{Conrad}, $w^2+w = x \bmod
f_1(x)$ has a solution $w \in \F_q[x]$ if and only if
${\rm Tr}_{\F_q/\F_2}(ac_1) = 0$, where ${\rm Tr}_{\F_q/\F_2}$ is the trace from
$\F_q$ to $\F_2$. Likewise, $w^2+w = x \bmod f_1(x)$ has a solution $w
\in \F_q[x]$ if and only if ${\rm Tr}_{\F_q/\F_2}(ac_2) = 0$. For our
algorithm to separate $f_1$ and $f_2$, it suffices for
${\rm Tr}_{\F_q/\F_2}(a(c_1+c_2))$ to equal $1$ with constant
probability. Since $c_1 \neq c_2$, $c_1+c_2 \neq 0$ and
${\rm Tr}_{\F_q/\F_2}(a(c_1+c_2))$ does equal $1$ with probability at least
$1/2$.
\end{remark}

%####################################################

\section{Implementation and example}
\label{sec:impl}

We implemented Algorithms~\ref{alg:hasse-inv} and~\ref{alg:factoring}
in C++ using NTL~\cite{shoup2001ntl}; the implementation source code
can be found at
\url{https://github.com/javad-doliskani/supersingular_drinfeld_factoring}. After
running the algorithm on $10^3$ random input polynomials, we observed
that the splitting $\gcd(\bar h_{\phi, f}, f)$ is almost always
nontrivial when $f$ is reducible; this confirms the behavior expected
from theory.

We also observed that in practice one does not need to impose the
condition $\sqrt{q} \ge 12(n+2)$ for splitting a polynomial of degree
$n$ over $\F_q$; this remedies the need for working in an auxiliary
extension over $\F_q$. Recall that Lemma~\ref{splitting_lemma} uses
that assumption to bound from below the probability of finding a
non-trivial factorization; we expect that the recursion depth be
logarithmic in $n$ also for smaller values of $q$.  At worst, one may
first attempt splitting $f$ over $\F_q$ and in case of repeated failures,
we may then switch to factorization over an extension; this
was not implemented.

In terms of runtime, our algorithm is slower that NTL's built-in
factorization routine, by what is roughly a constant factor; this is
not entirely surprising, since working with $2\times 2$ matrices
induces a non-negligible overhead for our algorithm. To show the
behaviour of the algorithm in practice, we give in this section an
example output of Algorithm \ref{alg:factoring} on input a small
degree polynomial.

\begin{example*}
Consider the randomly selected squarefree polynomial 
\[ f = 2 + 6x + 5x^3 + 4x^4 + 6x^5 + 2x^7 + 3x^8 + 3x^9 + x^{10} \]
in $\F_7[x]$. The algorithm starts by checking for linear factors; $f$ has none. So, in the 
next step it generates the random supersingular elliptic module $\phi$ given by
choosing $d=1+x$ and computing
\[
\begin{array}{rll}
	g_\phi & = & 3 + 3x + 5x^3 + x^4 + x^5 + x^6 + 6x^7 + 5x^8 + 5x^9,\\
        \Delta_\phi&=&4x+4x^2+5x^3+3x^4+5x^5+6x^6+5x^7+4x^8+2x^9.
\end{array}
\]
This means $\phi$ has complex multiplication by the imaginary
quadratic extension $\F_7(x)(\sqrt{x + 1})$ of $\F_7(x)$. After
computing $\bar h_{\phi, f}$, we
find $\gamma = \gcd(\bar h_{\phi, f}, f) = 1 + 4x + x^2 + 4x^3 + x^4$; this is
the product of the factors of $f$ that are supersingular with respect
to $\phi$. Now, $\gamma$ does not pass the irreducibility check, so the
process is repeated for $\gamma$. The next randomly generated supersingular
elliptic module is built by choosing $d=5+x$ and computing
\[
\begin{array}{rll}
	g_{\tilde\phi} & = & 6 + x + 5x^2 + 5x^3,\\
	\Delta_{\tilde \phi} & = & 4+x + 6x^2 + 4x^3.
\end{array}
\]
We then split $\gamma$ as $ \tilde \gamma = \gcd(\bar h_{\tilde \phi,
  \gamma}, \gamma) = 4 + 6x + x^2$, and $\gamma/\tilde \gamma = 2 + 5x
+ x^2$ which are both irreducible; again, this means that $\tilde \gamma$
is supersingular and $\gamma/\tilde \gamma$ is ordinary with respect to $\tilde
\phi$. Finally, $f / \gamma = 2 + 5x + 6x^2 + 3x^3 + 6x^4 + 6x^5 + x^6$
which is also irreducible. The complete factorization of $f$ is
then \[ f(x) = (4 + 6x + x^2)(2 + 5x + x^2)(2 + 5x + 6x^2 + 3x^3 +
6x^4 + 6x^5 + x^6).\]
\end{example*}

\paragraph*{Acknowledgements.} We thank the reviewer of a first version
of this paper for useful remarks, which led us to this revised version. Doliskani is partially 
supported by NSERC, CryptoWorks21, and Public Works and Government Services Canada. Narayanan is 
supported by NSF grant \#CCF-1423544, Chris Umans' Simons Foundation Investigator grant and 
European Union's H2020 Programme under grant agreement number ERC-669891.

\bibliographystyle{siamplain}
\bibliography{references}
\end{document}